\documentclass[a4paper]{amsart}
\usepackage{a4}
\usepackage{amsmath}
\usepackage{amssymb}
\usepackage[applemac]{inputenc}
\usepackage{graphicx}
\usepackage{color}
\topmargin -1.0cm
\oddsidemargin 0in
\evensidemargin 0in
\textwidth 6.3 truein
\textheight 9.1 truein










\newcommand{\PP}{ \mathbb{P}}
\newcommand{\FF}{ \mathbb{F}}
\newcommand{\QQ}{ \mathbb{Q}}




 \newcommand{\Z}{{\mathbb Z}}

\newcommand{\EE}{{\mathbb E}}

\newtheorem{remark}{\textbf{Remark}}[section]

\newtheorem{lemma}{\textbf{Lemma}}[section]
\newtheorem{theorem}{\textbf{Theorem}}[section]

\newtheorem{proposition}{\textbf{Proposition}}[section]

\newtheorem{definition}{\textbf{Definition}}[section]

\numberwithin{equation}{section}

\title{ Sensitivity analysis for expected utility maximization in incomplete Brownian market models }

\author{
Julio Backhoff Veraguas  \and     Francisco J. Silva   }

\def\dd{{\rm d}}




\def\weight(#1,#2){c_{#1,#2}}





%
 \if {
  


 } \fi

\if {

}{{\caption}}
} \fi















\def\E{\mathcal{E}}
\def\F{\mathcal{F}}

\def\P{\mathcal{P}}

\def\SS{\mathcal{S}}

\def\V{\mathcal{V}}

\def\X{\mathcal{X}}
\def\Y{\mathcal{Y}}
\def\Z{\mathcal{Z}}



%



\DeclareMathOperator*{\esssup}{ess\,sup}



\def\1B{{\bf  1}}


\newcommand{\RR}{\mathbb{R}}



\newcommand\be{\begin{equation}}
\newcommand\ee{\end{equation}}
\newcommand\ba{\begin{array}}
\newcommand\ea{\end{array}}
\newcommand{\bean}{\begin{eqnarray*}}
\newcommand{\eean}{\end{eqnarray*}}






\thanks{  The first author is most grateful for partial support by the Austrian Science Fund (FWF) under grant Y782-N25 and the European Research Council (ERC) under grant FA506041, as well as to Humboldt-Universit\"at zu Berlin and the funding by the Berlin Mathematical School. Institute of Statistics and Mathematic methods in Economics , Vienna University of Technology (julio.backhoff@tuwien.ac.at).}

\thanks{   The second author acknowledges partial support by the Gaspar Monge Program for Optimization and Operation Research (PGMO). Institut de recherche XLIM-DMI, UMR-CNRS 7252 Facult\'e des sciences et techniques 
Universit\'e de Limoges, 87060 Limoges, France (francisco.silva@unilim.fr) }

\begin{document}

\maketitle
\begin{abstract}  
We examine the issue of sensitivity with respect to model parameters for the problem of utility maximization from final wealth in an incomplete Samuelson model and mainly, but not exclusively, for utility functions of  positive-power type. The method consists in moving the parameters through change of measure, which we call a \textit{weak perturbation},  decoupling the usual wealth equation from the varying parameters. By rewriting the maximization problem in terms of a convex-analytical support function of a weakly-compact set, crucially leveraging on the work \cite{BFrobustez1}, the previous formulation let us prove the Hadamard directional differentiability of the value function w.r.t.\ the drift and interest rate parameters, as well as for volatility matrices under a stability condition on their Kernel, and derive explicit expressions for the directional derivatives. We contrast our proposed weak perturbations against what we call \textit{strong perturbations}, where the wealth equation is directly influenced by the changing parameters.  Contrary to conventional wisdom, we find that both points of view generally yield different sensitivities unless e.g.\ if initial parameters and their perturbations are deterministic.
\end{abstract}

{\bf Keywords: }Sensitivity analysis, First order sensitivity, Utility maximization, Weak formulation.


\section{Introduction}

The problem of continuous-time utility maximization in financial market models has a long and rich history going back to Merton in \cite{Merton}-\cite{MertonJET}, himself inspired in the work of Mirrlees and Samuelson in discrete times. {The research on this topic continued in the eighties  through the works of Pliska \cite{PliskaOpt},  Karatzas et al. (see e.g.\ \cite{KaraLehSchr,KaraLehSchr91}), Cox and Huang \cite{CoxHuang}  and then probably culminated in the nineties with the general treatment of Kramkov and Schachermayer in \cite{KrSch}}. Naturally a comprehensive list would have to cover the works of many other people, but we do not intend to be exhaustive here and instead convey the interested reader to the books \cite{KaraShreveFinance} and \cite{Pham} for details. What all these works have in common, is that they provide an insight into the decision making problem of how to best select a portfolio from a given continuous-time, stochastic market model under the optimality criterion provided by the expected utility paradigm of von Neumann-Morgenstern.

It goes without a saying that in modelling the decision-making in such way, several parameters have to be chosen and therefore both the optimal portfolio rule and the optimal expected utility derived from it will be a function of these. Yet only recently the behaviour of the expected utility maximization problem in terms of its parameter-dependence has gained attention. In \cite{KrSch}, for the case of general semimartingale models and an agent optimizing expected utility from final wealth only and no random endowment, the first-order sensitivity of the problem's value function (i.e.\ the optimal value) with respect to the initial wealth of the agent is studied, extending earlier results in \cite{PliskaOpt}. More recently and in a similar setting, a second-order analysis of the value function is performed in \cite{Kramkov} and even the first-order sensitivity of the optimizing wealth is carried out. A different trait in the literature has been the study of the stability (i.e.\ continuity) of the value function with respect to  the so-called market price of risk or Sharpe ratio, which is a dynamic and stochastic parameter, heuristically measuring how much a given price model is away of a risk-neutral one (given by its martingale component). This analysis was performed in \cite{ZitLar} initially  (see also e.g.\ \cite{MWstability} for recent developments), and then extended in \cite{ZitKar} for the case when a random endowment is present. The last article goes beyond that and actually proves stability of utility-based prices and admits misspecification of the utility functions themselves (see also  \cite{Larsenpref} and the references given, for more on this subject).  The previous articles focus on equivalent perturbations of a reference probability measure or a reference price process; recently \cite{Weston} has showed that for non-equivalent perturbations the problem may be unstable/discontinuous. 


{In this article we focus on the first-order sensitivity analysis of the optimal value  of the expected utility maximization problem with respect to the {\it market price of risk} and the {\it drift and volatility} coefficients of the model. We work in the classical setting where   the utility function  is defined  on the positive half line, in the absence of consumption and random endowments,
and we restrict ourselves to a Brownian filtration and the so-called  Samuelson price model (e.g.\ geometric Brownian motion),  {which can be incomplete}. In this framework, it is to be expected from the general stochastic maximum principle of \cite{CanKara95} (specifically {Section} 2 therein) and recent results in \cite{BSsensibilidad1}, that the desired differentiability can be computed with the help of the adjoint states appearing in the stochastic maximum principle. 
 There are however several delicate points for this roadmap to work, the main one being that market prices of risk are multiplied by the decision variable (portfolio weigths) in the controlled wealth equation, and so standard convex analysis arguments for convex perturbations are not applicable. Alternative arguments based in abstract optimization theory (see \cite[Chapter 4]{BonSha} for the general theory and  \cite{BSsensibilidad1} for its application to stochastic control) seem diffucult to apply since they require  a normed vector space setting which is a priori absent in our problem. As a matter of fact,   decision variables are a priori only almost surely square integrable with respect to the time variable. For these reasons, we choose in this article a different approach still allowing for a direct treatment of the first-order sensitivity question.}


 Let us be precise as to how we interpret parameter uncertainty/misspecification in this article. We take the widespread point of view of robust or worst-case stochastic optimization, in which one encodes uncertain parameters in uncertain probability measures under which the stochastic optimization problems are to be defined. See e.g.\ \cite{Quenez} or \cite{EpsteinChen} in the context of model-misspecification and Knightean uncertainty in economics, or \cite{ZitKar} for the question of stability in utility maximization / utility-based prices. Accordingly, we postulate that full knowledge of the parameters of a problem amounts to, in our case, a complete description of the controlled wealth equation, meaning concretely the drift, interest rate and volatility coefficients (and hence the market price of risk). Parameter uncertainty means for us that the actual possible trajectories of the controlled system may have different ``probability weights'' than {those} specified by the law on {the} path space induced by the controlled equation under the exact, ``real'' parameters. Consequently, the expected utility maximization problem under a perturbation of a ``real'' parameter consists for us in perturbing the reference probability measure away from the law induced by the ``real'' controlled equation (that is, the one given the ``real'' parameters) yet otherwise leaving such ``real'' controlled equation fixed in the process. {Naturally, the perturbation of the probability measure is defined with the help of Girsanov's theorem and the optimal value of the new problem is   referred to as the
%
{\it weakly perturbed} value function.} In this work, we shall study the differentiability and compute the directional derivatives of this weakly perturbed value function with respect to the drift and volatility coefficients, computed in a neighbourhood of the ``real'' parameters.  As for all the articles around the topic of stability and sensitivity of the expected utility maximization problem already cited, only \cite{ZitKar} takes this point of view. The others consider {{\it strong}} perturbations of the problem, {meaning that} the reference probability measure is kept fixed and the equations are perturbed. 


We remark that concurrently and independently from us, a related question has been posed and analyzed in \cite{prepLarsen} in the context of power utility functions with negative exponents and a semimartingale market model. We shall, on the contrary, focus our analysis on power utility functions with positive exponents, {and more generally on utility functions dominated from above by such positive-power functions (see Theorem \ref{teoremarkmenossimpli} and comments thereafter)}. In \cite{prepLarsen} the authors essentially study the dual problem and its associated dual value function, and from this they obtain the desired sensitivities of the primal, original problem. More substantially, 
 the main difference with respect to our work is that in the cited article the sensitivity is studied in the strong sense (see our discussion after introducing  Assumption {\bf (H1)} in Section \ref{preliminares}) with respect to the market price of risk parameter. In our work, we emphasize the analysis of perturbations in the weak sense performed directly on the drift and the volatility terms.
 { One of the advantages of the Brownian market model we consider is that it allows us to bring to light  some delicate issues relating market incompleteness and the type of perturbations we are able to handle. Indeed, in the weak formulation  we are forced to   consider a restricted space for perturbations of the volatility parameter, namely those which preserve the Kernel (see Remark \ref{kernelstability}). We believe that this  discussion is essential and it seems absent in the literature.}


{An additional} nice feature of our Brownian framework is that it allows us to compare the sensitivity analysis in the strong and weak senses {in a most transparent way}.  A detailed discussion about the differences between these approaches is provided in  Sections \ref{preliminares} and \ref{onvaluefunctions}. As we will see, the sensitivities of the value function obtained from strong or weak perturbations need not coincide, and we provide examples in Section \ref{subsec counter} for this situation.   This is at odds with the implicit conventional wisdom that ``it makes no difference how one perturbs parameters''. We shall also show in an example that the weak sensitivity can behave in a counterintuitive fashion. Both phenomena occur when the nominal parameters are non-deterministic, so the lesson is that one should be cautious when applying weak (i.e. Girsanov-type) perturbations in such a situation. Although we do not provide a sensitivity analysis for strong perturbations, we can guess how the associated sensitivities would look like (consistently with \cite{prepLarsen}), and compare them to our weak sensitivities. Using Bismut's integration by parts formula we find out exactly how these differ; see equation \eqref{eq heuristic} in Section \ref{onvaluefunctions}. It is also worth noticing that if both the  nominal and the perturbed market parameters are deterministic functions, the directional sensitivities do coincide under our hypotheses, as we show in Proposition \ref{igualdadfuerteydebil}. 

When performing the differentiability analysis of the weakly perturbed problem, we greatly rely on recent results having their origin in \cite{BFrobustez1} and \cite{JBtesis}. Indeed, the crucial fact is that  we may interpret the expected utility maximization problem as the computation of a convex-analytical support function of a weakly-compact convex set in an explicit Banach space. The usefulness of working with weak perturbations and the weakly perturbed value function is that its differentiability and directional derivatives can then be computed by adapting Danskin's Theorem for support functions and using the chain rule for directional derivatives. For this, the Fr\'echet directional differentiability of the Girsanov transform as an operator between essentially bounded integrands and elements in the pre-dual of the aforementioned Banach space has to be established. This issue poses most of the challenges in the present article.  { Our choice of dealing directly with the primal problem, via this support-function interpretation, is a second major distinction from \cite{prepLarsen}. }

In a nutshell our work has two original  contributions. The first one is to provide new sensitivity results for weakly perturbed problems and {fairly} precise expressions for the directional derivatives. The main tool here is, as discussed in the previous paragraph, a hidden compactness property of the feasible set in a natural topological space. {In fact, we consider this purely primal analysis as a methodological contribution of its own, as opposed to more classical points of view in mathematical finance such as duality or stochastic control.} The second contribution is the detailed discussion on the type of perturbations allowed as well as on the difference between weak and strong perturbations and their associated sensitivities. {Let us stress again that the simplicity of the market model we consider allows us to address the subtleties of the problem, and obtain the aforementioned contributions, in a clean and precise manner.}

The paper is structured as follows. In Section \ref{preliminares} we present our Samuelson model, define the  {strong/weak} perturbations and {strongly/weakly} perturbed value functions and describe our main result regarding differentiability of the value function {under weak perturbations}; Theorem \ref{teoremarkmenossimpli}. {Of equal importance, we also prove that in the case of deterministic parameters and perturbations the strongly and weakly perturbed value functions do coincide, whereas we also provide two simple examples showing that in the general case the strong and weak sensitivities can differ.} In Section \ref{utilmax} we provide for convenience of the reader a summary of the results in \cite{BFrobustez1} needed for our proofs. Section \ref{posssens} is the backbone of the article, where we prove the main sensitivity result.
{Then in Section \ref{onvaluefunctions} we present a discussion on how the strong and weak sensitivities are connected.}
Finally, in the appendix, we briefly study support functions and prove a needed adaptation of the classical Danskin's Theorem.

\section{Problem statement}
\label{preliminares}
We first fix some notations. In the entire article $\RR_{+}$ ($\RR_{++}$ respectively) will denote the set of non-negative (respectively strictly positive) real numbers. Given $T \in \RR_{++}$,  we consider a fixed filtered probability space $(\Omega,\F_T,\FF=\{\F_t\}_{t\leq T},\PP)$, where the filtration $\FF$ satisfies the usual assumptions (see e.g. \cite{Protter-book}). {Actually except for the results presented in Section 3, in which we survey some of the findings in \cite{BFrobustez1}, we will assume that $\FF$ is the   completed filtration of the Brownian motion defined therein}. We will denote  by $L^0$ (resp.\ $L^0_+$) the set of all $\F_T$-measurable functions (resp.\ non-negative ones), and by $L^{\infty,\infty}_{\FF}$ the set of essentially bounded real-valued progressively measurable processes endowed with the norm $\|\cdot\|_{\infty,\infty}$ defined as the least essential upper bound. Integration with respect to a measure $\QQ$ shall be denoted $\EE^{\QQ}$ except for $\QQ=\PP$, for which we reserve the notation $\EE$. Given a local continuous  martingale $M: \Omega \times [0,T] \to {\RR}$, we denote by 
$L^2_{loc}(M)$ the set of all  progressively measurable processes $H: \Omega \times [0,T] \to  \RR$ such that {$\PP( \int_{0}^{T} H^2_s \dd \langle M \rangle_s< +\infty)=1$,
 where $\langle M\rangle_{(\cdot)}$ denotes the quadratic variation process associated to $M$}. Finally, given a continuous semimartingale $Y$, we denote by $\E(Y)$, the {Dol\'eans-Dade} stochastic exponential, defined as the solution of $ Z_t= 1+ \int_{0}^{t}Z_s \dd Y_s$, for $t\in [0,T]$.

Let us consider a general Samuelson's price model for this section, where discounted prices evolve continuously as geometric Brownian motions with  progressively measurable drift and volatility coefficients. Specifically, suppose that the market consists of $d$ assets  $S^{1}, \hdots, S^{d}$ whose prices (denoted likewise) evolve under $\PP$ as
\be
\ba{rcl}
\dd S_t&=&\mbox{diag}(S_t)\bar{\mu}_t\dd t+\mbox{diag}(S_t)\bar{\sigma}_t \dd W_t \; \; \mbox{for } \; t \in [0,T], \\[4pt]
S_0&=& {s_{0}} \in \RR^{d},\\[4pt]
\ea\ee
where  $S:=(S^{1}, \hdots, S^{d})$ and $W$ is a $\PP$-Brownian motion in $\RR^n$ ($n\geq d$). The precise properties on the processes $\bar{\mu} \in (L^{\infty, \infty}_{\FF})^{d}$ and $\bar{\sigma} \in (L^{\infty, \infty}_{\FF})^{d\times n}$ shall be given shortly and  will imply that  the financial market  is viable  and moreover standard (see e.g.\ \cite[Chapter 1]{KaraShreveFinance} or \cite[Chapter 7.2.4]{Pham} for these concepts and the modelling details). 

Given an initial wealth $x\in {\RR_{++}}$ and a  {\it self-financing portfolio} $\pi$ measured in units of wealth such that $\pi^i \in {L^{2}_{loc}(W^k)}$ ($i\in {1,\dots,d}$ and $k=1,\hdots, n$), which we denote $\pi\in\Pi$,
 the associated {\it wealth process} $X$ is defined through the equation
\be\label{vdebil}\ba{rcl}
\dd X_t^{\pi}&=&  {\pi_t^{\top}\bar{\mu}_t} \dd t+\pi_t^{\top}\bar{ \sigma}_t \dd W_t \hspace{0.4cm} \mbox{for $ t\in [0,T]$},\\[4pt]
X_0^{\pi}&=&x. \ea
\ee
{In this work,} we consider the following utility maximization problem 
\be\label{problemaformulacionfuerte}
u({\bar{\mu},\bar{\sigma}}):= \sup  \left\{ \EE\left( U(X_{T}^{ \pi})\right) \; ; \;  \pi \in \Pi \; \; \mbox{and } \; X^{\pi}_t \geq 0 \; \;  \forall \; t\in [0,T],  \; \mbox{$\PP$-a.s.}\right\},
\ee
where $U:= \RR \to \RR \cup \{-\infty\}$ is a concave  {\it utility function}, whose properties will be specified in {Section \ref{utilmax}}, but for the time being we suppose that  $U(x)=-\infty$ if $x< 0$ and the restriction of $U$ to $\RR_{+}$ takes values in $\RR_{+}$ and is invertible. Since the financial market is viable, almost sure non-negativity of $X_{T}^{\pi}$ implies that $X_{t}^{\pi}\geq 0$ for all $t\in [0,T]$, $\PP$-a.s. Thus, 
$$u({\bar{\mu},\bar{\sigma}})= \sup_{  \pi\in\Pi} \EE\left( U(X_{T}^{ \pi})\right).$$   
{If we want to perform a sensitivity analysis with respect to the new parameters} $\mu^{\tau},\sigma^{\tau}$ {(indexed by a ``size factor'' $\tau>0$)},
there are at least two modelling options. One, which we call the \textit{strongly perturbed formulation}, is to consider a new process $S^{\tau}$ with dynamics like that of $S$ but under the new parameters, so that the {perturbed} wealth processes {have} the form:
\be\ba{rcl}
\dd X_t^{\pi,\tau}&=&  \pi_t^{\top} \mu_t^{\tau}  \dd t+\pi_t^{\top} \sigma_t^{\tau} \dd W_t \hspace{0.4cm} \mbox{for $ t\in [0,T]$},  \\[4pt]
X_0^{\pi,\tau}&=&  x. \ea\label{Prob(k)}
\ee
The perturbed problem becomes (we use the \textit{s} to denote \textit{strongly perturbed})
\begin{equation}
u^s(\mu^{\tau},\sigma^{\tau}):= \sup_{\pi\in\Pi}\EE\left[U(X^{\pi,{\tau}}_T) \right ].\label{eqstrong}
\end{equation}
{Now, let us assume that $\bar{\sigma}$ has   full {rank} almost everywhere and that $(\bar{\sigma} \bar{\sigma}^{\top})^{-1}$ is essentialy bounded}. Defining the {\it market price of risk} process   $$\textstyle\bar{\lambda} :=\bar{\sigma}^{\top}(\bar{\sigma}\bar{\sigma}^{\top})^{-1}\bar{\mu}\in (L^{\infty, \infty}_{\FF})^{d},$$ equation  \eqref{vdebil} can be written as
\be\label{vdebil2}\ba{rcl}
\dd X_t^{\pi}&=& \pi_t^{\top}\bar{\sigma}_t\left[\bar{\lambda}_t \dd t+ \dd W_t\right ] \hspace{0.2cm} \mbox{for all } \; t \in [0,T],\\[4pt]
X_0^{\pi}&=&x. \ea
\ee
{Following \cite{ZitKar}, instead of fixing the reference probability measure $\PP$ and considering perturbations directly affecting the dynamics of the $X$'s, it is reasonable to fix the latter processes (i.e.\ with the nominal parameters) and assume that the reference probability measure is perturbed. Given the perturbed parameters $(\mu^{\tau},\sigma^{\tau})$, assuming that  {$(\sigma^{\tau} (\sigma^{\tau})^{\top})^{-1}$ is essentially bounded} and setting 
$$\textstyle \lambda^{\tau}:=(\sigma^{\tau})^{\top}(\sigma^{\tau}(\sigma^{\tau})^{\top})^{-1}\mu^{\tau},$$
for the corresponding perturbed market price of risk process, its is natural to define 
 $$\textstyle\dd\PP^{\tau}:=\mathcal{E}\left ( \int(\lambda^{\tau}-\bar{\lambda})^{\top}\dd W \right)_T \dd\PP.$$
{Note that Novikov's condition implies that $\PP^{\tau}$ is a probability measure, equivalent to $\PP$}. As explained in  \cite[Section 2.2]{ZitKar}, if  $(\mu^{\tau}, \sigma^{\tau})$ converges to $(\bar{\mu},\bar{\sigma})$, then $\PP^{\tau}$ converges to $\PP$ in the total variation norm. 
%
%
%
 Therefore, taking this point of view, we define
\begin{align}
u^w(\mu^{\tau},\sigma^{\tau})&:= \sup_{ \pi\in\Pi} \EE^{\PP^{\tau}}\left[U(X^{\pi}_T) \right ],\label{wvf}
\end{align}
and we call} $u^{w}$ {the} \textit{weakly perturbed formulation} of $u(\bar{\mu},\bar{\sigma})$ in \eqref{problemaformulacionfuerte}, where we insist, one modifies the initial problem by changing the probability measure. Let us remark that this function is motivated only \textit{locally} in the sense that $\bar{\mu}$ and $\bar{\sigma}$, which determine $X^{\pi}$ for a given $\pi\in\Pi$, have been fixed in order to define it. We omit this dependence from the notation of $u^w$. Of course {$u^{s}(\bar{\mu}, \bar{\sigma})=u^{w}(\bar{\mu}, \bar{\sigma})= u(\bar{\mu},\bar{\sigma})$. For the sake of clarity, we fix now the assumptions made for $(\bar{\mu}, \bar{\sigma})$ and the perturbed parameters $(\mu^{\tau}, \sigma^{\tau})$:\medskip\\
{{\bf(H1)} The matrix $\bar{\sigma}$ has full rank and $(\bar{\sigma}\bar{\sigma}^{\top})^{-1}$ is uniformly bounded in $(t,\omega)$. Moreover,  the perturbations $\sigma^{\tau}$ of $\bar{\sigma}$ satisfy  $\mbox{Ker}(\bar{\sigma})=\mbox{Ker} (\sigma^{\tau})$ (equivalently $\mbox{Im}(\bar{\sigma}^{\top}) =\mbox{Im}([\sigma^{\tau}]^{\top}) $) and $(\sigma^{\tau}(\sigma^{\tau})^{\top})^{-1}$ is uniformly bounded in $(t,\omega)$.} \smallskip

The weakly and strongly perturbed value functions in terms of the $\lambda^{\tau}$'s are defined by overloading notation: $u^w(\lambda^{\tau}):=u^w(\mu^{\tau},\sigma^{\tau})$ and $u^s(\lambda^{\tau}):=u^s(\mu^{\tau},\sigma^{\tau})$. We remark that under {\bf(H1)} the strongly perturbed value function $u^s(\mu^{\tau},\sigma^{\tau})$ coincides with the one presented in \cite{prepLarsen}. In fact, noting that {\bf(H1)} implies that 
$$ \textstyle \left\{\int_{0}^{T}\hat{\pi}_t^{\top}   [\lambda^{\tau} \dd t + \dd W_t ]\; ; \; \hat{\pi}= [\sigma^{\tau}]^{\top}\pi \; , \; \pi\in\Pi\right\} = \left\{\int_{0}^{T}{\pi}_t^{\top}   [\bar{\sigma}\lambda^{\tau} \dd t + \bar{\sigma}\dd W_t ]\; ; \; \pi\in\Pi\right\},$$
setting
 $\tilde{\lambda}^{\tau}:=(\bar{\sigma}\bar{\sigma}^{\top})^{-1}\bar{\sigma}\lambda^{\tau}$ we get
\begin{align*}
 u^s(\mu^{\tau},\sigma^{\tau})&= \textstyle \sup\left\{ \EE\bigl[U\bigl(x+\int_{0}^{T}\hat{\pi}_t^{\top}   [\lambda^{\tau} \dd t + \dd W_t ]\bigr )\bigr] \; ; \; \hat{\pi}= [\sigma^{\tau}]^{\top}\pi \; \; \mbox{for some $\pi\in \Pi$} \right\}\\
 &= \textstyle\sup\left\{ \EE\bigl[U\bigl(x+\int_{0}^{T} \pi_t^{\top}   [\bar{\sigma}\lambda^{\tau} \dd t + \bar{\sigma}\dd W_t \bigr )\bigr] \; ; \; \pi \in \Pi \right\}\\
  &=\textstyle \sup\left\{ \EE\bigl[U\bigl(x+\int_{0}^{T} \pi_t^{\top}   [\bar{\sigma}\bar{\sigma}^{\top}\tilde{\lambda}^{\tau} \dd t + \bar{\sigma}\dd W_t \bigr )\bigr] \; ; \; \pi \in \Pi \right\}\\
  &= \textstyle\sup\left\{ \EE\bigl[U\bigl(x+\int_{0}^{T} \pi_t^{\top}   [\dd \langle M\rangle_t  \tilde{\lambda}^{\tau}+\dd M_t ]\bigr )\bigr] \; ; \; \pi \in \Pi \right\}=:\tilde{u}(\tilde{\lambda}^{\tau}),
 \end{align*}
where $M_t:=\int_0^t \bar{\sigma}\dd W$. Therefore,  we can interpret $M$ as the (unperturbed) martingale driving the market in \cite{prepLarsen} and $\tilde{\lambda}^{\tau}$ as the corresponding market price of risk, which one may vary, and hence $\tilde{u}(\tilde{\lambda}^{\tau})$ is a perturbed value function of its own. If however $\mbox{Ker}(\bar{\sigma})=\mbox{Ker}(\sigma^{\tau})$ fails, both the approach of \cite{prepLarsen} as well as our approach pertaining $u^w$ are ill-suited. 
{\begin{remark} \label{kernelstability}
{\rm(i)}  From the previous discussion we see that the sensitivity analysis of $u^w$ is meaningful under the condition {\bf(H1)} on the Kernels, in which case also the study of $\tilde{u}$ above makes sense.  {This invariance of the null space of the volatility term under the considered perturbations is our main assumption and allows us to provide explicit sensitivity results in terms of perturbations of the volatility term $\bar{\sigma}$. {Let us point out that in the complete case (i.e.\ $\bar{\sigma}$ is invertible)  a similar argumentation can be found in   \cite[Section 2.2]{ZitKar}.} The case of general perturbations of $\bar{\sigma}$ is beyond the scope of the present work; see \cite{Weston} for an insight into the difficulties to be expected.} \\ 
{\rm(ii)}The assumptions for  $\sigma^{\tau}$ in {\bf(H1)} are satisfied for  $\sigma^{\tau}= \bar{\sigma}+ A^{\tau} (\bar{\sigma} \bar{\sigma}^{\top})^{-1} \bar{\sigma}$ where $A^{\tau} \in (L^{\infty,\infty}_{\FF})^{d\times d}$   has small enough norm. This holds in particular for $\sigma^{\tau}=\bar{\sigma}+\tau A (\bar{\sigma} \bar{\sigma}^{\top})^{-1} \bar{\sigma}$ with $A\in (L^{\infty,\infty}_{\FF})^{d\times d}$ arbitrary and $\tau$ a small enough real number. 
\end{remark}
}

As we will see in Section \ref{subsec counter}, the values $u^{s}$ and $u^{w}$, as well as their sensitivities, generally differ. 
On the other hand, the next {result} shows that if the parameters $\bar{\mu}$, $\bar{\sigma}$ and their perturbations $\mu^{\tau}$, $\sigma^{\tau}$ are deterministic, then $u^{s}$ and $u^{w}$ (and so their sensitivities)  do coincide.
%

{\begin{proposition}\label{igualdadfuerteydebil}
{Assume that $\mu^{\tau},\sigma^{\tau},\bar{\mu},\bar{\sigma}$ are deterministic, and that {\bf(H1)} holds.
Then the weak and strong value functions coincide; $u^{s}(\mu^{\tau},\sigma^{\tau})=u^{w}(\mu^{\tau},\sigma^{\tau})$.}
%
\end{proposition}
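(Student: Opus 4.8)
The plan is to reduce both $u^{s}(\mu^{\tau},\sigma^{\tau})$ and $u^{w}(\mu^{\tau},\sigma^{\tau})$ to one and the same optimization problem with deterministic coefficients, namely
\[
\mathcal{V}\;:=\;\sup_{\pi\in\Pi}\EE\Bigl[U\Bigl(x+\int_{0}^{T}\pi_t^{\top}\bar{\sigma}_t\bigl(\lambda^{\tau}_t\,\dd t+\dd W_t\bigr)\Bigr)\Bigr].
\]
For the strongly perturbed value this is essentially done already in the computation displayed just before Remark~\ref{kernelstability}: under {\bf(H1)} one has $\mu^{\tau}=\sigma^{\tau}\lambda^{\tau}$ together with the set equality $\{[\sigma^{\tau}]^{\top}\pi:\pi\in\Pi\}=\{\bar{\sigma}^{\top}\pi:\pi\in\Pi\}$ (both sides consist of the progressively measurable processes valued in the common image $\mathrm{Im}(\bar{\sigma}^{\top})=\mathrm{Im}([\sigma^{\tau}]^{\top})$ and having the required local square-integrability, one passing from one representation to the other via the orthogonal projector $\bar{\sigma}^{\top}(\bar{\sigma}\bar{\sigma}^{\top})^{-1}\bar{\sigma}$, resp.\ its $\sigma^{\tau}$-counterpart). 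Hence $u^{s}(\mu^{\tau},\sigma^{\tau})=\mathcal{V}$, and note that determinism of the coefficients is not used at this step.

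For the weakly perturbed value I would first apply Girsanov's theorem. Since $\bar{\mu},\bar{\sigma},\mu^{\tau},\sigma^{\tau}$ are deterministic, so are $\bar{\lambda}$ and $\lambda^{\tau}$; recalling that $\PP^{\tau}\sim\PP$, the process $W^{\tau}_t:=W_t-\int_{0}^{t}(\lambda^{\tau}_s-\bar{\lambda}_s)\,\dd s$ is a $\PP^{\tau}$-Brownian motion, whence for every $\pi\in\Pi$
\[
X^{\pi}_T=x+\int_{0}^{T}\pi_t^{\top}\bar{\sigma}_t\bigl(\bar{\lambda}_t\,\dd t+\dd W_t\bigr)=x+\int_{0}^{T}\pi_t^{\top}\bar{\sigma}_t\bigl(\lambda^{\tau}_t\,\dd t+\dd W^{\tau}_t\bigr)\qquad\PP^{\tau}\text{-a.s.},
\]
so that $u^{w}(\mu^{\tau},\sigma^{\tau})=\sup_{\pi\in\Pi}\EE^{\PP^{\tau}}\bigl[U\bigl(x+\int_{0}^{T}\pi_t^{\top}\bar{\sigma}_t(\lambda^{\tau}_t\,\dd t+\dd W^{\tau}_t)\bigr)\bigr]$. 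It then remains to ``replace $(\PP^{\tau},W^{\tau})$ by $(\PP,W)$'' in this last expression. This is where the determinism hypothesis is crucial: because $\lambda^{\tau}-\bar{\lambda}$ is a deterministic function of time, $W^{\tau}$ generates the same $\PP$-augmented (equivalently $\PP^{\tau}$-augmented) filtration $\FF$ as $W$ does, and its law under $\PP^{\tau}$ is $n$-dimensional Wiener measure, i.e.\ the very law of $W$ under $\PP$. Representing each $\pi\in\Pi$ as a progressively measurable functional of the driving path---and using that $\Pi$ and the objective involve only $\FF$, Lebesgue measure on $[0,T]$ (the common quadratic variation of $W$ and $W^{\tau}$), and the deterministic data $\bar{\sigma},\lambda^{\tau}$---one concludes that, as $\pi$ ranges over $\Pi$, the quantity $\EE^{\PP^{\tau}}\bigl[U\bigl(x+\int_{0}^{T}\pi_t^{\top}\bar{\sigma}_t(\lambda^{\tau}_t\,\dd t+\dd W^{\tau}_t)\bigr)\bigr]$ sweeps out the same set of values as $\EE\bigl[U\bigl(x+\int_{0}^{T}\pi_t^{\top}\bar{\sigma}_t(\lambda^{\tau}_t\,\dd t+\dd W_t)\bigr)\bigr]$. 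Therefore $u^{w}(\mu^{\tau},\sigma^{\tau})=\mathcal{V}$.

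Combining the two reductions gives $u^{s}(\mu^{\tau},\sigma^{\tau})=\mathcal{V}=u^{w}(\mu^{\tau},\sigma^{\tau})$, which is the claim. The only genuinely delicate point is the final identification in the weak case---checking that the admissible strategies and the terminal-wealth functional transform correctly under the deterministic Girsanov shift. The equality of the completed natural filtrations of $W$ and $W^{\tau}$ is immediate (the shift being deterministic and $\PP\sim\PP^{\tau}$), so the real content is the functional representation of $\FF$-progressively measurable processes and the ensuing measurable/pathwise stability of the It\^o integral under this translation; after that, everything is routine bookkeeping.
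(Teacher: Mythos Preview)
Your proof is correct and follows essentially the same approach as the paper's: both rely on Girsanov's theorem to obtain the $\PP^{\tau}$-Brownian motion $W^{\tau}$ (the paper calls it $B$), the equality of filtrations $\F^{W^{\tau}}=\F^{W}$ afforded by the deterministic shift, and the kernel/image condition in {\bf(H1)} to pass between $\bar{\sigma}^{\top}\pi$ and $[\sigma^{\tau}]^{\top}\pi$. The only organizational difference is that you introduce the intermediate value $\mathcal{V}$ and show $u^{s}=\mathcal{V}=u^{w}$, whereas the paper writes the chain of equalities directly to prove $u^{s}\leq u^{w}$ and then says the reverse inequality follows symmetrically; the substance is the same.
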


\begin{proof}

Define $B_t:=W_t-\int_0^t[\lambda^{\tau}_s-\bar{\lambda}_s]\dd s$, so by Girsanov Theorem $B$ is a $\PP^{\tau}$-Brownian motion. Notice that $\F^B=\F^W$. Taking $\pi$ feasible for the perturbed problem we have
\be\label{calculo_equivalencia}
\ba{rl}\textstyle
\EE^{\PP}\bigl[U\bigl( x+ \int_0^T\pi_t(W)^{\top}\sigma_t^{\tau}[\lambda_t^{\tau}\dd t +\dd W_t  ]  \bigr)  \bigr ] &= \textstyle \EE^{\PP^{\tau}}\bigl[U\bigl( x+ \int_0^T\pi_t(B)^{\top}\sigma_t^{\tau}[{\lambda}_t^{\tau}\dd t +\dd B_t  ]  \bigr)  \bigr ]  \\[4pt]
&= \textstyle \EE^{\PP^{\tau}}\bigl[U\bigl( x+ \int_0^T\pi_t(B)^{\top}\sigma_t^{\tau}[\bar{\lambda}_t\dd t +\dd W_t  ]  \bigr)  \bigr ] \\[4pt]
&= \textstyle \EE^{\PP^{\tau}}\bigl[U\bigl( x+ \int_0^T\tilde{\pi}_t(W)^{\top}{\sigma}_t^{\tau}[\bar{\lambda}_t\dd t +\dd W_t  ]  \bigr)  \bigr ]\\[4pt]
&= \textstyle \EE^{\PP^{\tau}}\bigl[U\bigl( x+ \int_0^T\hat{\pi}_t(W)^{\top}\bar{\sigma}_t[\bar{\lambda}_t\dd t +\dd W_t  ]  \bigr)  \bigr ]\\[4pt]
&\textstyle \leq u^w({\mu}^\tau,{\sigma}^\tau),
\ea 
\ee
where we first used that $B$ is $\PP^{\tau}$-BM, then the definition of $B$, then we built $\tilde{\pi}$ by equality of filtrations, and finally the assumption on the image of the matrices $\bar{\sigma}^{\top}$ and $[\sigma^{\tau}]^{\top}$. 
 Having begun with a feasible element for the unperturbed problem and reasoning as above, yields the opposite inequality.
\end{proof}

}
{
\begin{remark}\label{difficulty_stochastic_parameters}
Note that if  $\mu^{\tau},\sigma^{\tau},\bar{\mu},\bar{\sigma}$ are random, then the previous proof does not work. Indeed, following the lines of the proof, we would have that $B_t:=W_t-\int_0^t[\lambda^{\tau}_s(W)-\bar{\lambda}_s(W)]\dd s$ is a $\PP^{\tau}$-Brownian motion and so, following \eqref{calculo_equivalencia}, we would get
$$\textstyle
\EE^{\PP}\bigl[U\bigl( x+ \int_0^T\pi_t(W)^{\top}\sigma_t^{\tau}(W)[\lambda_t^{\tau}(W)\dd t +\dd W_t  ]  \bigr)  \bigr ] = \textstyle \EE^{\PP^{\tau}}\bigl[U\bigl( x+ \int_0^T\pi_t(B)^{\top}\sigma_t^{\tau}(B)[{\lambda}_t^{\tau}(B)\dd t +\dd B_t  ]  \bigr)  \bigr ], 
$$
whose right hand side generally differs from 
$$\textstyle \EE^{\PP^{\tau}}\bigl[U\bigl( x+ \int_0^T\pi_t(B)^{\top}\sigma_t^{\tau}(B)[\bar{\lambda}_t(W)\dd t +\dd W_t  ]  \bigr)  \bigr ].$$
\end{remark}
}

\vspace{0.5cm}

Let us go back, for once and for all, to weakly perturbed parameters. As commented in the introduction, the continuity of $u^w$ (in a broader context) as a function of $\lambda$ was analysed in \cite{ZitKar} . We move towards the first-order analysis now. Consider the set
\begin{equation}
\mathcal{M}^e(S)=\left\{\PP^* \sim \PP:S \mbox{ is a }\PP^*\mbox{-local martingale}  \right\} .\label{me}
\end{equation}
{By \cite[Proposition 7.2.1]{Pham} we have that $\mathcal{M}^e(S)$ is given by the set  of random variables $Y^{\nu}_T$, where for $ \nu^{i}\in L^2_{loc}(W^{i})$ ($i=1,\hdots,m$) and  $\nu \in \mbox{Ker}(\bar{\sigma})$ almost everywhere, and where the process $Y^{\nu}_t$ is the exponential martingale  ${Y^{\nu}_t}:=\E\left (-\int [\bar{\lambda}+\nu]^{\top}\dd W \right )_t$}.  Given $Z\in L^{0}$, let us define
\begin{equation} J(Z):= \sup_{\mathbb{M} \in \mathcal{M}^e(S)}\EE^{\mathbb{M}}\left [U^{-1}(\vert Z\vert) \right ]. \label{Jinicial}
\end{equation}
Since $\sup_{ \pi\in\Pi} \EE^{\PP^{\tau}}\left[U(X^{\pi}_T) \right ]=\sup_{L \in C(x)} \EE^{\PP^{\tau}}\left[U( L) \right]$,
where 
$$C(x)=\left\{ L \in  L^{0}_{+} \; ; \; \exists \;  \pi\in\Pi, \; \; L\leq  X_{T}^{\pi}  \; \; \mbox{a.s.}\right\},$$
letting   {$Z= U(X_T)$ and using the usual budget-constraint (see e.g.\ \cite[Corollary 7.2.1]{Pham})} we can further rewrite problem \eqref{wvf} as:
\begin{equation}\textstyle
{u^w(\mu^{\tau},\sigma^{\tau})=u^{w}(\lambda^{\tau})}=\sup \left\{\EE\bigl[\mathcal{E}\bigl ( \int(\lambda^{\tau}-\bar{\lambda)}^{\top}\dd W \bigr)_T Z\bigr] \; ; \; J(Z)\leq x, \; \;  Z\in L^0_+ \right\}.\label{lamano}
\end{equation}
%
\normalsize

 Thanks to our rewriting of $u^w$ in \eqref{lamano}, we will be able to deal with the analysis of the differentiability of this function with respect to all the parameters. { More precisely, \eqref{lamano} opens the way to interpreting the sensitivity analysis of $u^w$ as the study of a convex-theoretic support function, as we had hinted at in the introduction.}
Under appropriate assumptions, we ultimately prove in Theorem \ref{teoremarkmenossimpli} the following sensitivity results with respect to $(\mu,\sigma)$. We refer the reader to Definition \ref{INADA} for the meaning of $U$ being a utility function satisfying INADA conditions, and to the appendix for the definition of Hadamard differentiability:
\begin{theorem}\label{teoremarkmenossimpli} 
Suppose $U$ is an utility function satisfying INADA conditions and  such that $U(0+)=0$ as well as the bound for some $p\in(1,\infty)$:
$$U(x)\,\leq\, C x^{1/p},\,\, \mbox{ for all }x\geq 0.$$
Consider some perturbations $(\Delta \mu, \Delta \sigma) \in  (L^{\infty, \infty}_{\FF})^{d}\times (L^{\infty, \infty}_{\FF})^{d\times n}$ and suppose that {\bf(H1)} is satisfied for  $(\mu^{\tau}, \sigma^{\tau}):=(\bar{\mu}+ \tau \Delta \mu, \bar{\sigma}+\tau \Delta\sigma)$  and small enough $\tau$. Then, the directional derivative $Du^{w}(\bar{\mu},\bar{\sigma})(\Delta \mu, \Delta \sigma)$ exists and is given by
%
%
\begin{align*}\textstyle
D_{\mu}u^{{w}}(\bar{\mu},\bar{\sigma})\Delta \mu &=\textstyle \EE \left [ U(\bar{X}(T))  \int_0^T [\bar{\sigma}^{\top}[\bar{\sigma}\bar{\sigma}^{\top}]^{-1}\Delta \mu]^{\top}\dd W \right ], \\[4pt]
D_{\sigma}u^{{w}}(\bar{\mu},\bar{\sigma})\Delta \sigma &= \textstyle \EE \left [ U(\bar{X}(T))   \int_0^T \left\{\Delta\sigma^{\top}[\bar{\sigma}\bar{\sigma}^{\top}]^{-1}\bar{\mu}-  \bar{\sigma}^{\top} [\bar{\sigma}\bar{\sigma}^{\top}]^{-1}  [\bar{\sigma}\Delta\sigma^{\top} +\Delta\sigma\bar{\sigma}^{\top} ]  [\bar{\sigma}\bar{\sigma}^{\top}]^{-1}\bar{\mu}\right\}^{\top}\dd W  \right],
\end{align*}
where $\bar{X}(T)$ is the unique optimal terminal wealth attaining  $u(\bar{\mu},\bar{\sigma})$. Moreover, the application $(\mu, A) \in  (L^{\infty, \infty}_{\FF})^{d}\times (L^{\infty, \infty}_{\FF})^{d\times d}\mapsto  u^{w}(\mu, A (\bar{\sigma}\bar{\sigma}^{\top})^{-1}\bar{\sigma}) \in \RR$ is Hadamard differentiable at $(\bar{\mu},\bar{\sigma}\bar{\sigma}^{\top})$.
\end{theorem}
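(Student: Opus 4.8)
The plan is to reduce the sensitivity analysis of $u^w$ to a Danskin-type differentiation of a support function, following the interpretation already suggested around equation \eqref{lamano}. Write $\Lambda^\tau := \lambda^\tau - \bar\lambda \in (L^{\infty,\infty}_\FF)^n$ and let $G: L^{\infty,\infty}_\FF \to \cC^*$ denote the Girsanov operator sending an essentially bounded integrand $\Theta$ to (the predual representative of) the Radon--Nikodym density $\mathcal{E}(\int \Theta^\top \dd W)_T$, where $\cC$ is the Banach space from \cite{BFrobustez1} for which $\{Z \in L^0_+ : J(Z) \le x\}$ is a weakly-compact convex set $\cK$. By \eqref{lamano}, $u^w(\lambda^\tau) = \sigma_\cK\bigl(G(\Lambda^\tau)\bigr)$, the support function of $\cK$ evaluated at $G(\Lambda^\tau)$. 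The first step is therefore to invoke the results surveyed in Section \ref{utilmax} (from \cite{BFrobustez1}) to guarantee that $\cK$ is nonempty, convex and weakly compact in $\cC$, that the support function $\sigma_\cK$ is well-defined and finite on $G$'s range, and that the maximizer in \eqref{lamano} — the optimal terminal wealth $U(\bar X(T))$, or rather $Z^* = U(\bar X(T))$ — is unique; uniqueness here will come from strict concavity of $U$ together with the INADA conditions and the growth bound $U(x)\le Cx^{1/p}$, which places everything inside the $L^p$-type duality framework of \cite{BFrobustez1}.

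The second step is the key technical one: establishing that the Girsanov operator $G$ is Fr\'echet directionally differentiable from $L^{\infty,\infty}_\FF$ into $\cC^*$ at the point $\Theta = 0$ (corresponding to the nominal parameters), with derivative in direction $\Delta\Theta$ equal to $\int_0^T \Delta\Theta_t^\top \dd W_t$ (since $\mathcal{E}(\int \Theta^\top \dd W)_T = 1 + \int_0^T \mathcal{E}(\cdot)_t \Theta_t^\top \dd W_t$ and the stochastic exponential tends to $1$). This requires controlling the remainder $\mathcal{E}(\int \tau\Delta\Theta^\top \dd W)_T - 1 - \tau\int_0^T \Delta\Theta^\top \dd W$ in the norm of $\cC^*$, uniformly over directions in a bounded set, which is where essential boundedness of the integrands and the explicit $L^q$-duality of $\cC$ (with $q$ conjugate to $p$) are used — one needs moment estimates for stochastic exponentials of bounded integrands, which are standard but must be assembled carefully. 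Then the chain rule for Hadamard/directional derivatives composes this with the differential of the support function.

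The third step applies the Danskin-type theorem proved in the appendix: since $\cK$ is weakly compact and the maximizer $Z^*$ is unique, $\sigma_\cK$ is Hadamard differentiable (as a function on $\cC^*$) at every point where the maximizer is unique, with derivative $\phi \mapsto \langle Z^*, \phi\rangle = \EE[U(\bar X(T))\,\psi]$ when $\phi$ is represented by $\psi$. Composing with $DG(0)$ gives
\[
Du^w(\bar\mu,\bar\sigma)(\Delta\mu,\Delta\sigma) = \EE\!\left[U(\bar X(T))\int_0^T \bigl(\Delta\lambda\bigr)_t^\top \dd W_t\right],
\]
where $\Delta\lambda = \frac{d}{d\tau}\big|_{\tau=0}\lambda^\tau$ with $\lambda^\tau = [\sigma^\tau]^\top(\sigma^\tau[\sigma^\tau]^\top)^{-1}\mu^\tau$ and $(\mu^\tau,\sigma^\tau) = (\bar\mu+\tau\Delta\mu,\bar\sigma+\tau\Delta\sigma)$. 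It remains the fourth step, a routine but lengthy matrix-calculus computation: differentiating $\tau \mapsto [\sigma^\tau]^\top(\sigma^\tau[\sigma^\tau]^\top)^{-1}\mu^\tau$ at $\tau=0$ using the product rule and the identity $\frac{d}{d\tau}(A^\tau)^{-1} = -(A^\tau)^{-1}(\dot A^\tau)(A^\tau)^{-1}$ with $A^\tau = \sigma^\tau[\sigma^\tau]^\top$ and $\dot A^0 = \bar\sigma\Delta\sigma^\top + \Delta\sigma\bar\sigma^\top$. Splitting the result into the $\Delta\mu$-part (with $\Delta\sigma = 0$) and the $\Delta\sigma$-part (with $\Delta\mu = 0$) yields exactly the two displayed expressions for $D_\mu u^w$ and $D_\sigma u^w$; here \textbf{(H1)} — specifically the kernel-invariance $\mathrm{Im}(\bar\sigma^\top) = \mathrm{Im}([\sigma^\tau]^\top)$ and uniform invertibility of $\bar\sigma\bar\sigma^\top$ and $\sigma^\tau[\sigma^\tau]^\top$ — is what keeps all these processes essentially bounded so that $\Delta\lambda \in (L^{\infty,\infty}_\FF)^n$ and the chain rule is legitimate.

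For the final Hadamard-differentiability assertion, note that the map $(\mu,A)\mapsto A(\bar\sigma\bar\sigma^\top)^{-1}\bar\sigma$ lands precisely in the admissible class of volatility perturbations (it automatically satisfies the kernel condition of \textbf{(H1)}, being of the form $A\cdot[(\bar\sigma\bar\sigma^\top)^{-1}\bar\sigma]$ with fixed right factor spanning $\mathrm{Im}(\bar\sigma^\top)$), so the composition $(\mu,A)\mapsto u^w(\mu, A(\bar\sigma\bar\sigma^\top)^{-1}\bar\sigma)$ is the composition of a smooth (indeed affine-in-$(\mu,A)$ up to the bounded-inverse factor) map with the Hadamard-differentiable map $u^w$, hence Hadamard differentiable at $(\bar\mu,\bar\sigma\bar\sigma^\top)$; one must check that the intermediate Girsanov composition remains Hadamard (not merely G\^ateaux) differentiable, which follows because the directional derivative computed above depends continuously and linearly on the direction and the convergence in the Danskin step is uniform on compact sets of directions — this uniformity, inherited from the weak compactness of $\cK$, is exactly what upgrades G\^ateaux to Hadamard. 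The main obstacle throughout is the second step: proving Fr\'echet directional differentiability of the Girsanov map into the predual of $\cC$, since $\cC^*$ is a somewhat delicate space and the estimates must be uniform over direction-balls to survive the chain rule and deliver the Hadamard (rather than merely directional G\^ateaux) conclusion.
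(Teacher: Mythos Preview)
Your outline is essentially the paper's approach: rewrite $u^w$ as the support function of the weakly compact set $\cK=\{Z\in L_J^+:J(Z)\le x\}$ evaluated at the Girsanov density, prove Fr\'echet differentiability of the Girsanov map into the predual, apply the Danskin-type Lemma~\ref{lemmaenvelope} together with uniqueness of the maximizer, and finish by the chain rule with the Fr\'echet derivative of $(\mu,\sigma)\mapsto\lambda(\mu,\sigma)$.

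There is, however, one genuine technical idea you are missing in your second step. You write that the remainder estimate for the Girsanov map uses ``the explicit $L^q$-duality of $\cC$ (with $q$ conjugate to $p$)''. That explicit $L^q$-type description of the predual norm --- namely \eqref{normaI} --- is only available under {\bf(H2)}, i.e.\ when $U$ is exactly a positive power. For a general $U$ satisfying merely $U(x)\le Cx^{1/p}$, the modular space $L_I$ is defined through $zV(y/z)$ and its norm is \emph{not} an $L^q$-norm, so you cannot run the moment estimates for stochastic exponentials directly there. The paper's device is precisely this: first prove everything (Lemma~\ref{lemg}, Proposition~\ref{propopreli}, Theorem~\ref{teoremarksimpli}) under {\bf(H2)}, where the norm is explicit and the Cauchy--Schwarz/BDG arguments go through; then observe that $U\le Cx^{1/p}$ implies $V\le \bar V=cy^{1/(1-p)}$, hence $L_{\bar I}\hookrightarrow L_I$ continuously, and set $G=i\circ g$ with $i$ the (linear, bounded, hence Fr\'echet) inclusion. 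Fr\'echet differentiability of $G$ into the \emph{general} $L_I$ is then inherited for free, and the rest of your argument proceeds unchanged. Without this injection trick your step~2 does not obviously close for non-power $U$.
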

An example of $U$ satisfying the assumptions in Theorem \ref{teoremarkmenossimpli} is $U(x)=x^{1/p}$ with $p\in(1,\infty)$, the so-called positive power case. A further example  is given e.g.\ by the inverse function of $y\in [0,\infty)\mapsto R(y):=e^y-y-1$. Indeed, $R^{-1}$ is non-negative, strictly concave and increasing, with $R^{-1}(0)=0$. It is also differentiable in $(0,\infty)$ and from $[R^{-1}]'(x)=1/(R'\circ R^{-1}(x))$ we find that $[R^{-1}]'(0)=+\infty$ and $[R^{-1}]'(+\infty)=0$. Finally, we easily see that $R^{-1}(x)\leq \sqrt{2} x^{1/2}$, or equivalently $y^2\leq 2[e^y-y-1]$, by Taylor expansion. Our result does not cover the case of negative powers.\\
We finally remark that if the market defined by $(\bar{\mu},\bar{\sigma})$ is complete, then $n=d$ and $\bar{\sigma}$ is invertible (see e.g.\ \cite[Theorem 6.6, Chapter 1]{KaraShreveFinance}). In this case, $u^{w}$ is Hadamard differentiable at $(\bar{\mu}, \bar{\sigma})$ and 
\be\label{diffcasocompleto}\textstyle Du^{w}(\bar{\mu},\bar{\sigma}) (\Delta \mu, \Delta \sigma) = \EE \left [ U(\bar{X}(T))  \int_0^T [\bar{\sigma}^{-1}  \Delta \mu- \bar{\sigma}^{-1}\Delta \sigma \bar{\sigma}^{-1}\bar{\mu}]^{\top}\dd W_t \right ].\ee

We proceed now to the counterexamples promised before Proposition \ref{igualdadfuerteydebil} and in the introduction.

\subsection{Counterexamples}\label{subsec counter}

Let us illustrate how, even in the  one-dimensional case,  $u^s$ and $u^w$ (as well as their directional derivatives) generally defer. For this to be the case, it is important that the reference market price of risk $\bar{\lambda}$ be random.

\begin{proof}[Example 1]
Let us take $U(x)=\log(x)$ if $x>0$ and $U(x)=-\infty$ if $x\leq 0$. Although this utility function does not fulfil our assumption, we use it to illustrate the phenomenon we are discussing. It is well known (see e.g.\ \cite[Chapter 7.3.5]{Pham}) that for a market model $\dd S_t=\lambda\dd \langle M\rangle_t  +\dd M_t$ for $M$ a martingale and $\lambda$ say essentially bounded, 
the optimal utility is 
$$\textstyle \log(x)+\frac{1}{2}\EE\bigl[ \int_0^T \lambda_t^{\top}\dd\langle M\rangle_t\lambda_t \bigr].$$
We thus conclude in our Brownian setting and for $\lambda^{\tau}=\bar{\lambda}+\tau\Delta$ that:
\begin{align*}\textstyle
u^s(\lambda^{\tau})& = \textstyle\log(x)+\frac{1}{2}\EE\bigl[\int_0^T\vert\bar{\lambda}+\tau\Delta\vert^2\dd t\bigr], \\
&=\textstyle \log(x)+\frac{1}{2}\EE\bigl[\int_0^T\vert\bar{\lambda}\vert^2\dd t\bigr]+\tau \EE\bigl[\int_0^T \bar{\lambda}^{\top}\Delta \dd t\bigr]+\frac{\tau^2}{2}\EE\bigl[\int_0^T\vert\Delta\vert^2\dd t\bigr].
\end{align*}
On the other hand, denoting $\dd\PP^{\tau}= \E\left(\tau\int\Delta\dd W \right)_T\dd\PP$ so $ W^{\tau}=W-\tau\int \Delta \dd t$ is a {$\PP^{\tau}$}-Brownian motion by Girsanov's theorem, and taking $\Delta$ deterministic so that $\F^{W^{\tau}}=\F$,  we get
\begin{align*}\textstyle
u^w(\lambda^{\tau})& =\textstyle \log(x)+\frac{1}{2}\EE^{\PP^{\tau}}\bigl[\int_0^T\{\bar{\lambda}+\tau\Delta\}^2\dd t\bigr],\\
&=\textstyle\log(x)+\frac{1}{2}\EE^{\PP^{\tau}}\bigl[\int_0^T\vert\bar{\lambda}\vert^2\dd t\bigr]+\tau \EE^{\PP^{\tau}}\bigl[\int_0^T \bar{\lambda}^{\top}\Delta \dd t\bigr]+\frac{\tau^2}{2}\EE^{\PP^{\tau}}\bigl[\int_0^T\vert\Delta\vert^2\dd t\bigr].
\end{align*}
 This already shows that the two value functions may easily differ, unless e.g.\ $\bar{\lambda}$ were further deterministic. Moreover, one can easily compute the first order sensitivities:
\begin{align*}\textstyle
\left .\frac{\dd u^s(\lambda^{\tau})}{\dd\tau}\right\vert_{\tau=0} &= \textstyle\EE\bigl[\int_0^T \bar{\lambda}^{\top}\Delta \dd t\bigr],\\ \textstyle
\left .\frac{\dd u^w(\lambda^{\tau})}{\dd\tau}\right\vert_{\tau=0} &= \textstyle\EE\bigl[\int_0^T \bar{\lambda}^{\top}\Delta \dd t\bigr] + \frac{1}{2}
\EE\bigl[\int_0^T \vert\bar{\lambda}\vert^2 \dd t\int_0^T\Delta\dd W\bigr],\\
&=\textstyle \EE\bigl[\int_0^T \bar{\lambda}^{\top}\Delta \dd t\bigr] + \frac{1}{2}
\EE\bigl[\int_0^T\bigl\{\int_0^t\Delta_s\dd W_s \bigr\} \vert\bar{\lambda}_t\vert^2 \dd t\bigr].
\end{align*}
We conclude that the sensitivities generally differ, unless again if e.g.\ $\bar{\lambda}$ was deterministic. To exemplify this point, the reader may take any bounded deterministic function {$\Delta$} and define $\bar{\lambda}(t,\omega)$ to be e.g.\ of euclidean norm $1$ if $\int_0^t\Delta_s\dd W_s$ is positive and $0$ otherwise.

This example also shows that the weak value function can behave in a counter-intuitive way in the presence of random parameters. For instance, taking $\bar{\lambda}_t:= {\bf 1}_{W_t<0}$ and $\Delta\equiv 1$ it is elementary to see that 
$$\textstyle \left .\frac{\dd u^s(\bar{\lambda} +{\tau})}{\dd\tau} \right\vert_{\tau=0} = \frac{T}{2} \,\,\,\,\,\,\,\, \mbox{ and } \,\,\,\,\,\,\,\,  {\left.\frac{\dd u^w(\bar{\lambda} +{\tau})}{\dd\tau} \right\vert_{\tau=0}} = \frac{T}{2} - \frac{T^{3/2}}{3\sqrt{2\pi}},$$
so as intuition suggest utility increases in the strong formulation whereas (for $T$ large enough) it decreases in the weak one.

%
\end{proof}\smallskip
\begin{proof}[Example 2]
We now present an example that does fulfil our assumptions on the utility function. Let us take $U(x)=2\sqrt{x}$ if $x\geq 0$ and $-\infty$ otherwise. We take $x=1$ for simplicity. By e.g.\ \cite[Chapter 7.3.5]{Pham} we know, in the one-asset case, that the optimal utility for a market model $\dd S=\lambda\dd \langle M\rangle+\dd M$ will be $$ \textstyle2\sqrt{\EE\bigl [ \exp\bigl\{ \int_0^T \lambda\dd M +\frac{1}{2}\int_0^T \lambda^2\dd\langle M\rangle\bigr\}\bigr]}.$$
Thus, in a one-dimensional Brownian setting and for $\lambda^{\tau}=\bar{\lambda}+\tau\Delta$ it holds:
\begin{align*} \textstyle
u^s(\lambda^{\tau})& = \textstyle 2\sqrt {\EE\bigl[ \exp\bigl\{ \int_0^T[ \bar{\lambda}+\tau\Delta]\dd W +\frac{1}{2}\int_0^T [ \bar{\lambda}+\tau\Delta]^2\dd t\bigr\}\bigr ] },
\end{align*}
and by Girsanov's theorem {and assuming $\Delta$ deterministic}:
\begin{align*} \textstyle
u^w(\lambda^{\tau})& =2 \textstyle\sqrt {\EE^{\PP^{\tau}}\bigl [ \exp\bigl\{ \int_0^T[ \bar{\lambda}+\tau\Delta](\dd W-\tau\Delta\dd t) +\frac{1}{2}\int_0^T [ \bar{\lambda}+\tau\Delta]^2\dd t\bigr\}\bigr] },
\end{align*}
where $\dd\PP^{\tau}= \E\left(\tau\int\Delta\dd W \right)_T\dd\PP$. We thus obtain the following first order sensitivities:
\begin{align*} \textstyle
\left .\frac{\dd u^s(\lambda^{\tau})}{\dd\tau}\right\vert_{\tau=0} &= \textstyle\frac{\EE\bigl[e^{\int_0^T\bar{\lambda}\dd W + \frac{1}{2}\int_0^T\bar{\lambda}^2\dd t}[ \int_0^T\Delta\dd W+\int_0^T\Delta\bar{\lambda}\dd t ] \bigr]}{\sqrt{\EE\left[e^{\int_0^T\bar{\lambda}\dd W + \frac{1}{2}\int_0^T\bar{\lambda}^2\dd t}\right ]}},\\ \textstyle
\left .\frac{\dd u^w(\lambda^{\tau})}{\dd\tau}\right\vert_{\tau=0} &= \textstyle  2\frac{\EE\bigl[e^{\int_0^T\bar{\lambda}\dd W + \frac{1}{2}\int_0^T\bar{\lambda}^2\dd t}\int_0^T\Delta\dd W \bigr]}{\sqrt{\EE\left[e^{\int_0^T\bar{\lambda}\dd W + \frac{1}{2}\int_0^T\bar{\lambda}^2\dd t}\right ]}}.
\end{align*}
From this, we see that 
$$\textstyle \left .\frac{\dd u^s(\lambda^{\tau})}{\dd\tau}\right\vert_{\tau=0}= \left .\frac{\dd u^w(\lambda^{\tau})}{\dd\tau}\right\vert_{\tau=0}\iff \EE\bigl[e^{\int_0^T\bar{\lambda}\dd W + \frac{1}{2}\int_0^T\bar{\lambda}^2\dd t}\bigl (\int_0^T\Delta\dd W - \int_0^T\Delta\bar{\lambda}\dd t\bigr ) \bigr]=0. $$\normalsize
This shows that the sensitivities generally differ, unless if further e.g.\ $\bar{\lambda}$ is deterministic. To exemplify, with Girsanov theorem and the product formula, the expectation in the r.h.s\ above becomes
$$ \textstyle\tilde{\EE}\left [ \int_0^T\left ( \int_0^t\Delta_s\dd {W}_s -\int_0^t \Delta_s\bar{\lambda}_s\dd s \right )e^{\int_0^t\bar{\lambda}^2_s\dd s}\bar{\lambda}^2_t\dd t\right ],$$ 
where $\tilde{\EE}$ denotes expectation under $\dd\tilde{\PP}:=\E\left(\int \bar{\lambda}\dd W\right)_T\dd \PP$. 
The reader may take any negative, bounded function {$\Delta$} and define $\bar{\lambda}(t,\omega)$ to be e.g.\ equal to $1$ if $\int_0^t\Delta_s\dd W_s$ is positive and $0$ otherwise. Then $\bigl ( \int_0^t\Delta_s\dd {W}_s -\int_0^t \Delta_s\bar{\lambda}_s\dd s \bigr )\bar{\lambda}^2_t$ is non-negative a.e.\ and can be seen to be strictly positive in a non-evanescent set. Thus the sensitivities differ in this case, and a fortriori also the value functions themselves.
\end{proof}

\section{The utility maximization problem as a support function of a weakly compact set}
\label{utilmax}
{In this section we survey some of the results in \cite{BFrobustez1}, where the setting, similar to that of \cite{KrSch},  is   more general  than ours as described  in the previous section.}

Let there be $d$ stocks and a bond, normalized to one for simplicity. Let $S=\left(S^i\right)_{1 \leq i \leq d}$ be the price process of these stocks, and $T<\infty$ a finite deterministic investment horizon. The process $S$ is assumed to be a {continuous} semimartingale in a filtered probability space $(\Omega,\mathbb{F},(\mathcal{F}_{t})_{t\leq T},\PP)$, where  $\PP$ will always stand for the \textit{reference measure}. The expectation with respect to $\PP$ will be denoted by  $\EE$ as before. 

A (self-financing) portfolio $\pi$ is defined as a couple $(X_0,H)$, where $X_0\geq 0$ denotes the (constant) initial value associated to it and $H=(H^i)_{i=1}^d$ is a  predictable and $S$-integrable process  which represents  the number of shares of each type under possession. The {\it wealth} $X=(X_t)_{t\leq T}$ associated to a portfolio $\pi$ is defined as
\begin{equation} \textstyle
X_t=X_0+\int_0^t H_u \dd S_u \hspace{0.3cm} \mbox{for all $t\in [0,T]$,} \label{riqueza}
\end{equation}
and the  set of attainable wealths from $x$ is defined as
\begin{equation}
\mathcal{X}(x)=\left\{ X\geq 0 \; \; \mbox{a.s. in $\Omega \times [0,T]$} \; ; \; X \mbox{ as in } \eqref{riqueza} \; \mbox{ with } \; X_0 \leq x \right\}. \label{admisibles}
\end{equation}
%
We assume in the sequel that the market is \textit{arbitrage-free}, in the sense of NFLVR (see e.g.\ \cite{DelSch}), which implies that $\mathcal{M}^e(S)$ (defined as in \eqref{me}) is not empty. 
As usual the market model is coined \textit{complete} if $\mathcal{M}^e(S)$  is reduced to a singleton, i.e. $\mathcal{M}^e(S) =\{\PP^*\}$, {and incomplete otherwise}. 
{The following set, introduced in \cite{KrSch},  plays a central role in portfolio optimization in incomplete markets} 
\begin{equation*}
\mathcal{Y}_{\PP}(y):= \left\{Y \geq 0| Y_0=y \mbox{ , } XY \mbox{ is } \PP-\mbox{supermartingale } \forall X \in \mathcal{X}(1)  \right\}. \label{YQ}
\end{equation*} 
{The set $\mathcal{Y}_{\PP}(y)$ generalizes the set of density processes (with respect to  $\PP$)  of  risk neutral measures equivalent to it.

Now, we consider the following notion of  utility function.}
   
\begin{definition} \label{INADA}
{A function $U:\RR \rightarrow  \RR \cup \{-\infty\}$ is called a \textit{utility function} if $U(x)=-\infty$ if $x \in (-\infty,0)$ and on $[0,\infty)$ we have that $U$  is strictly increasing, strictly concave and continuously differentiable. We say that $U$  satisfies the  \textit{INADA} conditions}  {\rm (\cite{inada64})} if 
$$U'(0+){:= \lim_{x\downarrow 0} U'(x)}=\infty \hspace{0.3cm} \mbox{ {\rm and} } \hspace{0.3cm} U'(+\infty)=0\, . $$ 
\end{definition}

%

%

As in e.g.\ \cite{KrSch}, we will make use of the {Fenchel conjugate  of $-U(-\cdot)$}, namely:
\begin{equation*}
V(y):=\sup_{x>0}[U(x)-xy] \mbox{ , } \hspace{0.2cm} \forall \; y>0. \label{V} \\
\end{equation*} 
%
%
%
%

In the remainder of this section, we will restrict our attention to the following setting:
\medskip\\
{{\bf (A1)}} $U$ is an utility function satisfying INADA conditions and  such that $U(0+)=0$. 

\begin{remark} The above assumption implies that $V\geq 0$ {and the existence of and inverse $U^{-1}: (0,\infty) \to (0,\infty)$}. {Of course,  by a translation argument we can assume that $U(0+)$ exists instead of the stronger $U(0+)=0$}. { In \cite{BFrobustez1}, on whose results we rely, it is assumed for simplicity that $U$ is unbounded from above, but this can be easily dispensed with from their work.} 
\end{remark}

The usual way to dealing with the issue of existence of an element $\hat{X}\in \mathcal{X}(x)$  satisfying
$$\EE[U(\hat{X}_T)]\geq \EE[U({X}_T)]\:\hspace{0.3cm} \mbox{for all } \; X\in\mathcal{X}(x),$$  
uses crucially a result usually referred to as Kolmos Theorem. This result states that, from a sequence of random variables which is bounded in probability, one can extract a subsequence of convex combinations convergent in probability. To apply this, one also needs growth conditions on $U$ and $U'$ (see e.g.   \cite{KrSch} or \cite[Theorem 7.3.4]{Pham}). However, as a corollary of the analysis in \cite{BFrobustez1} the authors show in \cite[Proposition 5.22]{BFrobustez1}  that a shorter if more involved compactness argument can be applied; the same idea will allow us to prove the {sensitivity results for $u^{w}$} in the next section.  
 
The desired compactness property mentioned above holds in a suitably designed space. {In order to motivate it,} we start by observing that for $X\in \mathcal{X}(x)$:
$$\sup_{Y\in\Y}\EE\left [YU^{-1}\circ U(X) \right]\leq x,$$ 
where $\Y:=\Y_{\PP}(1)$, and we (now and often hereafter) write $Y$ for $Y_T$ and $X$ for $X_T$, as long as the context is unequivocal. We then see that setting
\begin{equation}
J(\cdot):=\sup_{Y\in\Y}\EE\left [YU^{-1}(|\cdot |) \right], \label{Jfinal}
\end{equation}
for every $X\in \mathcal{X}(x)$ we have that $J(U(X))\leq x$. {We remark that \eqref{Jinicial} and \eqref{Jfinal} coincide by \cite[Proposition 3.1]{KrSch}, so notation is consistent.} Therefore we may conjecture that if $J$ was connected to a norm (or say, grew stronger than it) and if the space defined by such a norm, which we shall soon call $L_J$, was a strong dual one, then we would get the weak* relative compactness of the set $\{U(X):X\in \mathcal{X}(x)\}$  {immediately from Banach-Alouglu's Theorem}.

Let us now summarize the main topological results in {\cite[Section 5]{BFrobustez1}} for future reference. {Consider $J$ as above and define $I: L^0 \to \RR \cup \{+\infty\}$ as}
$$
I(Z):= \inf_{Y\in\Y}\EE\left[\vert Z\vert V\left({Y}/{\vert Z\vert}\right)\right].
$$
\begin{lemma}
Under Assumption {\bf({A1})}, the functions $I$ and $J$ are convex.
\end{lemma}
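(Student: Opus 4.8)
The plan is to prove convexity of $I$ and $J$ separately, in each case by exploiting that they are defined as an infimum (resp.\ supremum) of a family of functionals and reducing the problem to the convexity of an integrand jointly in its two arguments. Fix $Z_0, Z_1 \in L^0$, $t\in(0,1)$, and write $Z_t = (1-t)Z_0 + tZ_1$.

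For $J$, note that $J(Z) = \sup_{Y\in\Y}\EE[Y\,U^{-1}(|Z|)]$, and that a supremum of convex functions is convex. It therefore suffices to check that for each fixed $Y\in\Y$ the map $Z\mapsto \EE[Y\,U^{-1}(|Z|)]$ is convex. Since $Y\geq 0$, this follows from the pointwise inequality $U^{-1}(|Z_t|) \leq (1-t)U^{-1}(|Z_0|) + t\,U^{-1}(|Z_1|)$, which in turn reduces to: (a) the triangle inequality $|Z_t|\leq (1-t)|Z_0| + t|Z_1|$ together with the fact that $U^{-1}$ is increasing on $(0,\infty)$, and (b) the convexity of $U^{-1}$. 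Point (b) is standard: since $U$ is strictly increasing and strictly concave on $[0,\infty)$ with $U(0+)=0$ (so that $U^{-1}$ is well-defined on $[0,\infty)$, using the remark after {\bf(A1)}), its inverse $U^{-1}$ is strictly increasing and convex. One then takes expectations (allowing the value $+\infty$) to conclude.

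For $I$, the strategy is dual but slightly more delicate because of the quotient $Y/|Z|$. Here $I(Z)=\inf_{Y\in\Y}\EE[|Z|\,V(Y/|Z|)]$ is an infimum, so the naive "sup of convex is convex" argument does not apply directly; instead one uses that $\Y$ is a convex set (it is defined by the supermartingale property, which is preserved under convex combinations) and that the integrand $(z,y)\mapsto z\,V(y/z)$ is jointly convex on $(0,\infty)^2$ — this is the perspective function of the convex function $V$, and $V$ is convex as the Fenchel conjugate of $x\mapsto -U(-x)$. Concretely, given $Y_0$ near-optimal for $I(Z_0)$ and $Y_1$ near-optimal for $I(Z_1)$, one tests $I(Z_t)$ against $Y_t := (1-t)Y_0 + tY_1 \in \Y$ and invokes joint convexity of the perspective map, pointwise in $\omega$, to bound $|Z_t|\,V(Y_t/|Z_t|) \leq (1-t)|Z_0|\,V(Y_0/|Z_0|) + t\,|Z_1|\,V(Y_1/|Z_1|)$; taking expectations and then the infimum over the approximating sequences yields $I(Z_t)\leq (1-t)I(Z_0)+t\,I(Z_1)$. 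One must handle the boundary/degenerate cases where $|Z_i|$ vanishes on a set of positive measure by the usual conventions for the perspective function (e.g.\ $0\cdot V(y/0) := \lim_{z\downarrow 0} z V(y/z)$, which is the recession function of $V$), and since $V\geq 0$ these contributions are harmless.

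The main obstacle I expect is the bookkeeping around the perspective function $(z,y)\mapsto zV(y/z)$ when $z$ or $|Z_t|$ can be zero — i.e.\ making the joint-convexity argument for $I$ rigorous on all of $L^0$ rather than on the subset where $|Z|>0$ a.s. Once one adopts the standard lower-semicontinuous extension of the perspective function to the closed quadrant and notes $V\geq 0$, this is routine, but it is the one point where care is needed; everything else is a direct consequence of $U^{-1}$ being convex increasing, $V$ being convex, and $\Y$ being a convex cone.
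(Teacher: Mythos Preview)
Your argument for $J$ is correct: $U^{-1}$ is convex and increasing, so $z\mapsto U^{-1}(|z|)$ is convex, and a supremum of convex functionals is convex.

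For $I$, the perspective-function strategy is the right one, but there is a gap in the pointwise inequality you claim. Joint convexity of $(z,y)\mapsto zV(y/z)$ only yields
\[
\bigl[(1-t)|Z_0|+t|Z_1|\bigr]\,V\!\left(\frac{Y_t}{(1-t)|Z_0|+t|Z_1|}\right)\ \le\ (1-t)\,|Z_0|\,V\bigl(Y_0/|Z_0|\bigr)+t\,|Z_1|\,V\bigl(Y_1/|Z_1|\bigr),
\]
and the left-hand side is \emph{not} $|Z_t|\,V(Y_t/|Z_t|)$: the triangle inequality gives $|Z_t|\le (1-t)|Z_0|+t|Z_1|$, with strict inequality whenever $Z_0,Z_1$ have opposite signs on a non-null set. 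To bridge this you also need that $z\mapsto zV(y/z)$ is nondecreasing in $z$ for each fixed $y>0$. This does hold under {\bf(A1)}: writing $zV(y/z)=\sup_{x>0}[zU(x)-xy]$ and using $U\ge 0$ (from $U(0+)=0$ and $U$ increasing), each map $z\mapsto zU(x)-xy$ is nondecreasing in $z$, hence so is their supremum. With this monotonicity step inserted before the joint-convexity step, your chain of inequalities goes through. Your treatment of the convexity of $\Y$ and of the boundary $z=0$ via the recession extension is fine.

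The paper does not prove the lemma in-text but defers to an external reference, so there is no in-paper argument to compare against; modulo the omission above, your approach is the natural one.
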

\begin{proof}
See {\cite[Lemma 5.1]{BFrobustez1}}.
\end{proof}\smallskip

We consider the spaces
\begin{align*}
L_I &:= \left\{Z\in L^0: I(\alpha Z)<\infty \mbox{ for some }\alpha>0\right\}, &
E_I := \left\{Z\in L^0: I(\alpha Z)<\infty \mbox{ for every }\alpha>0\right\},\\
L_J &:=\left\{Z\in L^0: J(\alpha Z)<\infty \mbox{ for some }\alpha>0\right\}, &
E_J := \left\{Z\in L^0: J(\alpha Z)<\infty \mbox{ for every }\alpha>0\right\},
\end{align*}
and for $F$ denoting $I$ or $J$, we set the equivalent norms (see \cite[Theorem 1.10]{Musielak}):
\begin{align}
\| s \|_{F,\ell} &:= \inf\{\beta>0: F(s/\beta)\leq 1 \} &\| s \|_{F,a} &:= \textstyle \inf\left\{ \frac{1}{k}+ \frac{F(ks)}{k}: k>0 \right\}.\label{normasellya}
\end{align}
\begin{lemma} Under Assumption {{\bf(A1)}} and after identifying almost equal elements,  for $\gamma=\ell,a$ we have that $(E_{F}, \| \cdot \|_{F,\gamma} )$, $(L_{F}, \| \cdot \|_{F,\gamma})$ are normed linear spaces. Moreover, $E_F$ is a closed subspace of $L_F$ and both  $E_I$ and $L_J$ are Banach spaces. 
\end{lemma}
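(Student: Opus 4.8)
The plan is to verify the claimed statement — that for $\gamma\in\{\ell,a\}$ the pairs $(E_F,\|\cdot\|_{F,\gamma})$ and $(L_F,\|\cdot\|_{F,\gamma})$ are normed linear spaces, that $E_F$ is a closed subspace of $L_F$, and that $E_I$ and $L_J$ are Banach — by reducing everything to the standard theory of Orlicz-type (modular) spaces built from a convex modular, and then handling the two cases $F=I$ and $F=J$ separately using their distinct finiteness behaviour. The key structural input is the previous Lemma: $I$ and $J$ are convex functionals $L^0\to\RR\cup\{+\infty\}$. One checks that each of $F=I,J$ is in fact a \emph{convex modular} in the sense of Musielak: $F(Z)=0$ iff $Z=0$ a.s.\ (for $J$ this uses $U^{-1}(0)=0$ and $U^{-1}>0$ on $(0,\infty)$, so $J(Z)=0$ forces $Z=0$; for $I$ one uses $V\ge 0$ together with strict convexity of $V$ — more precisely $V(y)=0$ only at the point $y=U'(\infty)=0^+$-limit, which by the INADA condition cannot hold on a positive-measure set unless $Z=0$), $F(-Z)=F(Z)$ by the absolute value, and $F(\alpha Z)\le\alpha F(Z)$ for $\alpha\in[0,1]$ (equivalently lower semicontinuity of $\alpha\mapsto F(\alpha Z)$ at $0$ is automatic from convexity and $F(0)=0$). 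Once $F$ is recognized as a convex modular, \cite[Theorem 1.5 and Theorem 1.10]{Musielak} immediately give that $L_F$ (the modular space) and $E_F$ (the space of finite elements) are linear spaces, that $\|\cdot\|_{F,\ell}$ (the Luxemburg/Amemiya-type norm) and $\|\cdot\|_{F,a}$ are genuine norms, that they are equivalent, and that $E_F$ is a closed subspace of $(L_F,\|\cdot\|_{F,\gamma})$.

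The first concrete step, then, is to record that $F(0)=0$ and that $F$ takes values in $[0,+\infty]$; this is where Assumption {\bf(A1)} enters, since $U(0+)=0$ gives $V\ge 0$ hence $I\ge 0$, and the positivity and invertibility of $U$ on $[0,\infty)$ with $U^{-1}(0)=0$ gives $J\ge 0$ with $J(Z)=0\iff Z=0$. The second step is the modular inequalities: convexity of $F$ (the cited Lemma) plus $F(0)=0$ yields $F(\alpha Z+\beta W)\le\alpha F(Z)+\beta F(W)$ whenever $\alpha+\beta=1$, $\alpha,\beta\ge 0$, and in particular $F(\alpha Z)\le\alpha F(Z)$ for $\alpha\in[0,1]$; this is exactly the ``convex modular'' axiom. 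The third step is to invoke Musielak's general theorem to conclude normed-space structure, norm equivalence, and closedness of $E_F$ in $L_F$. I would present these three steps compactly, citing \cite{Musielak} for the abstract facts rather than reproving them.

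The remaining — and genuinely case-specific — claim is completeness of $E_I$ and of $L_J$. Here one cannot appeal to a black box, because in general a modular space $L_F$ need not be complete unless the modular has a Fatou-type (sequential lower semicontinuity) property; and $E_F$ is complete iff it is closed in a complete $L_F$, or iff the modular is additionally ``$\Delta_2$-like'' on finite elements. The standard sufficient condition is: if $F$ is lower semicontinuous with respect to convergence in probability (equivalently, $F(Z)\le\liminf_n F(Z_n)$ whenever $Z_n\to Z$ in $L^0$), then $(L_F,\|\cdot\|_{F,\gamma})$ is a Banach space, and then $E_F$, being a closed subspace, is Banach as well. So the plan is: (i) prove that $J$ is lower semicontinuous on $L^0$ — this follows from Fatou's lemma applied inside the supremum defining $J$, since $Z\mapsto\EE[YU^{-1}(|Z|)]$ is l.s.c.\ for each fixed $Y\in\Y$ by continuity and monotonicity of $U^{-1}$ and Fatou, and a supremum of l.s.c.\ functions is l.s.c.; hence $L_J$ is Banach. (ii) For $E_I$: one shows $I$ is l.s.c.\ on $L^0$ as well — here $I$ is an \emph{infimum} over $\Y$, so l.s.c.\ is not automatic; instead one uses that $I$ is convex (the Lemma) and, crucially, the duality/identity relating $I$ and $J$ coming from \cite{BFrobustez1} (the pairing making $L_J$ the space with modular $J$ and $L_I$ its ``conjugate''), together with the fact from \cite[Section 5]{BFrobustez1} that $E_I$ is a Banach space because it coincides with the closure of $L^\infty$ in $L_I$ and is identified with a closed subspace (or a predual) in the biting/weak-compactness setup there. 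In short, for $E_I$ and $L_J$ the completeness is precisely the content that \cite{BFrobustez1} established, so I would cite \cite[Section 5]{BFrobustez1} for it and only re-derive the l.s.c.\ of $J$ if a self-contained argument is wanted.

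The main obstacle I anticipate is exactly the completeness of $E_I$: the normed-space and closedness assertions are soft consequences of the modular formalism once convexity is known, but Banach-ness of $E_I$ is a substantive fact that hinges on the finer analysis in \cite{BFrobustez1} (their weak-compactness / predual construction), and one must be careful to state that $E_I$ — not the possibly-larger $L_I$ — is the Banach space, and to locate precisely which result of \cite{BFrobustez1} delivers this. A secondary, more technical point to get right is verifying the ``$F(Z)=0\Rightarrow Z=0$'' axiom for $I$: it requires knowing that $V(y)>0$ for all $y>0$, which follows from INADA ($U'(0+)=\infty$ forces $V(y)=\sup_x[U(x)-xy]>U(x_0)-x_0 y$ is attained at a strictly positive $x$ with $U(x)>x y$ for small $x$, so $V(y)>0$), but this should be spelled out since it is what makes $\|\cdot\|_{I,\gamma}$ a norm rather than a seminorm.
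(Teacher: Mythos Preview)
The paper gives no proof of this lemma at all: Section \ref{utilmax} is explicitly a survey of results from \cite{BFrobustez1}, and this lemma is simply stated as one of those results, with the understanding that its proof lives in \cite[Section 5]{BFrobustez1} and in Musielak's monograph. So your proposal does not merely match the paper's proof --- it goes well beyond it, by actually sketching the argument that the paper outsources.

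Your outline is sound and is essentially the argument one finds in \cite{BFrobustez1}: verify that $I$ and $J$ are convex modulars (using the preceding lemma for convexity, plus {\bf(A1)} for non-negativity and the separating property), invoke \cite[Theorems 1.5, 1.10]{Musielak} for the normed-space structure, equivalence of $\|\cdot\|_{F,\ell}$ and $\|\cdot\|_{F,a}$, and closedness of $E_F$, and then argue completeness separately. Your treatment of $L_J$ via the Fatou/l.s.c.\ route is correct. You are also right that the completeness of $E_I$ is the genuinely nontrivial point: since $I$ is an infimum, l.s.c.\ is not free, and one really does lean on the specific analysis in \cite{BFrobustez1}; citing it there, as you propose, is exactly what the paper implicitly does. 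The one place where your sketch is a bit loose is the claim ``$I(Z)=0\Rightarrow Z=0$'': pointwise positivity of $V$ on $(0,\infty)$ is not by itself enough, because $I$ is an infimum over $Y\in\Y$ and one must use the constraint $\EE[Y_T]\le 1$ to prevent $Y/|Z|$ from running off to $+\infty$ (where $V$ vanishes). You flagged this as a point needing care, which is appropriate; in \cite{BFrobustez1} this is handled directly.
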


Now, let us define $\Y^*:=\{Y\in\Y:Y>0 \mbox{ and }\forall \beta>0, \EE[V(\beta Y)]<\infty\}$ and suppose
$$
{{\bf(A2)}} \hspace{0.6cm} \Y^*\neq\emptyset, \hspace{3pt} I(Z)=\inf_{Y\in \Y^*}\EE[|Z|V(Y/|Z|)] \mbox{ , and }  J(X)=\sup_{Y\in \Y^*}\EE[YU^{-1}(|X|)].  
$$
\begin{remark} \label{atajo}
{Condition ${{\bf(A2)}}$ is satisfied for instance if the price process $S$ satisfies that $\dd S =\lambda \dd\langle M\rangle+\dd M$} for a continuous martingale $M$, ${\lambda \in L^2_{loc}(M)}$, the market model is viable and $\EE\left[V\left(\beta\E(\lambda\cdot M)_T\right) \right]<\infty$ for all $\beta >0$. See {\rm\cite[Lemma 5.7]{BFrobustez1}} for a proof of this fact.
\end{remark}
{The next result, proved in \cite[Proposition 5.10]{BFrobustez1},  establishes that $L_J$ is a strong dual space. }
\begin{theorem}
\label{teoresumen} Suppose that Assumptions {{\bf(A1)-(A2)}}  hold true. Then, the dual of $(E_I,\|\cdot\|_{I,a})$ is isometrically isomorphic to $(L_J,\|\cdot\|_{J,\ell})$.
%

\end{theorem}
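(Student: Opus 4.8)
The plan is to identify the dual of the Orlicz-type space $(E_I,\|\cdot\|_{I,a})$ with $(L_J,\|\cdot\|_{J,\ell})$ by exhibiting a concrete pairing and showing it is an isometric isomorphism. The natural candidate for the pairing is $\langle Z, X\rangle := \EE[ZX]$ for $Z\in E_I$ and $X\in L_J$: one first checks this is well-defined (i.e.\ $ZX\in L^1$), which follows from a Fenchel--Young-type inequality adapted to the modulars $I$ and $J$. Concretely, since $V(y)=\sup_{x>0}[U(x)-xy]$, for any $Y\in\Y^*$ one has the pointwise bound $|Z||X| \le |Z|V(Y/|Z|) + |X|U^{-1}(|X|)\cdot(\text{something})$; more precisely one uses $xy\le U(x)+V(y)$ with $x=U^{-1}(|X|)$ and $y=Y/|Z|$ after suitable scaling, together with $U\circ U^{-1}(|X|)=|X|$ and Assumption {\bf (A1)} (so that $U^{-1}$ and $V$ are nonnegative). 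Taking expectations and optimizing over $Y\in\Y^*$ via {\bf (A2)} yields $\EE[|ZX|]\le$ a constant times $\|Z\|_{I,a}\|X\|_{J,\ell}$ (or the analogous product of modular-normalized quantities), so every $X\in L_J$ induces a bounded linear functional on $E_I$ and the operator norm is $\le \|X\|_{J,\ell}$.

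Next I would prove this map $X\mapsto\EE[X\,\cdot\,]$ is an isometry. The inequality $\|X\|_{(E_I)^*}\le \|X\|_{J,\ell}$ comes from the previous step (after passing between the equivalent norms $\|\cdot\|_{F,a}$ and $\|\cdot\|_{F,\ell}$ using \cite[Theorem 1.10]{Musielak}); for the reverse, one uses the variational/duality characterization of the Luxemburg-type norm $\|X\|_{J,\ell}=\inf\{\beta>0:J(X/\beta)\le 1\}$ together with the fact that $J(X)=\sup_{Y\in\Y^*}\EE[Y U^{-1}(|X|)]$ is itself an explicit supremum. The point is that $J$ and $I$ are convex conjugate to each other in the appropriate sense — $J$ is (up to the $U^{-1}$ reparametrization) a Young-function modular whose complementary modular is $I$ — so that $\sup\{\EE[XZ]: I(Z)\le 1, Z\in E_I\}$ recovers $\|X\|_{J,\ell}$ up to the universal constant $2$ relating the two norms in \eqref{normasellya}. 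This is the standard Orlicz-space duality mechanism, but here it must be run with the non-standard modulars defined through the dual set $\Y^*$, so the delicate point is that the infima and suprema over $\Y^*$ interchange with the expectations and with the scalar optimization; Assumption {\bf (A2)} is precisely what makes $\Y^*$ a legitimate index set for this (rather than the possibly worse-behaved $\Y$).

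Finally I would prove surjectivity: every bounded linear functional $\varphi\in (E_I)^*$ is of the form $\varphi(Z)=\EE[XZ]$ for some $X\in L_J$. The scheme is classical: restrict $\varphi$ to indicators and use that $E_I\supseteq L^\infty$ (bounded random variables lie in $E_I$ since $I$ is finite on them, because $V$ is finite-valued under INADA and $\Y^*\ne\emptyset$), obtaining via a Radon--Nikodym / Riesz argument a finitely-additive set function that is in fact countably additive because $\varphi$ is order-continuous on the band $E_I$ (the ``$E$'' space, being the closure of simple functions, carries an absolutely continuous norm, which forces countable additivity). This produces $X\in L^0$ with $\varphi(Z)=\EE[XZ]$ on a dense subset; boundedness of $\varphi$ then forces, by the isometry computation above run in reverse, that $\|X\|_{J,\ell}=\|\varphi\|<\infty$, i.e.\ $X\in L_J$, and density extends the representation to all of $E_I$. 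I expect the surjectivity step, specifically verifying countable additivity / absolute continuity of the norm on $E_I$ and then identifying the resulting density as an element of $L_J$ (as opposed to merely $L_I$ or $L^0$), to be the main obstacle, since it is where the distinction between the small space $E_I$ and the large space $L_J$ — and hence the non-reflexivity typical of these Orlicz-type pairs — really has to be exploited. Since this theorem is quoted verbatim from \cite[Proposition 5.10]{BFrobustez1}, in the present article it suffices to invoke that reference, but the argument above is the route one would follow to establish it from scratch.
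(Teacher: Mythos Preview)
The paper itself does not prove this theorem: it is stated purely as a survey result, with the proof deferred entirely to \cite[Proposition 5.10]{BFrobustez1}. You correctly identify this at the end of your proposal, so strictly speaking nothing more is required here.

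Your from-scratch sketch follows the standard modular/Orlicz duality route and is broadly sound. One clarification on the H\"older step, which is a bit garbled in your write-up (``$\cdot(\text{something})$'', ``after suitable scaling''): from $U(x)\le xy+V(y)$ with $x=U^{-1}(|X|)$ and $y=Y/|Z|$ one gets the clean pointwise inequality $|Z||X|\le Y\,U^{-1}(|X|)+|Z|V(Y/|Z|)$; taking expectation and then infimum over $Y\in\Y^*$ (using {\bf(A2)}) yields $\EE[|ZX|]\le J(X)+I(Z)$, from which the norm bound $|\EE[ZX]|\le \|Z\|_{I,a}\|X\|_{J,\ell}$ follows by the usual Amemiya--Luxemburg pairing argument.

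For the reverse inequality and surjectivity you correctly flag the two genuinely delicate points: (i) that the conjugacy between $I$ and $J$ --- each defined via an optimization over $\Y^*$ running in \emph{opposite} directions --- must be tight enough to recover $\|X\|_{J,\ell}$ from $\sup\{\EE[XZ]:\|Z\|_{I,a}\le 1\}$; and (ii) that $E_I$ has absolutely continuous norm, so that every $\varphi\in(E_I)^*$ is order-continuous and hence representable by an $L^0$ density. Neither is automatic here, since $I$ and $J$ are not induced by a pointwise Young function but by an inf/sup over the random family $\Y^*$; these are precisely the issues handled in \cite{BFrobustez1}. Since the present paper only quotes the result, your closing remark that invoking the reference suffices is the correct disposition.
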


{

To wrap up, and in light of the expression \eqref{lamano} for $u^w$, we have given in this section conditions under which this weakly perturbed value function can indeed be viewed as a support function of a weakly compact set, namely $\{Z:J(Z)\leq x\}$. We  proceed in the next section to take advantage of this fact, in the context outlined in Section \ref{preliminares}, in order to perform the sensitivity analysis of our problem under weak perturbations.

\begin{remark} The spaces $L_J,L_I$ are examples of so-called modular spaces, which are generalizations of Orlicz spaces introduced by H. Nakano {\rm(}see \cite{Nakano2,Musielak}{\rm)}.
By e.g.\ H\"older inequality for modular spaces {\rm(}see {\rm \cite[Proposition 5.9]{BFrobustez1}}{\rm)} we have that $u^w$, given by \eqref{lamano}, is finite. Moreover, under our assumptions, {\rm\cite[Proposition 5.22]{BFrobustez1}} shows that the supremum therein is attained. Finally, it is easy to see that this optimizer is unique, as it must lie in the image set of $U$, which is a strictly concave function.
\end{remark}}

\section{Stability and sensitivity}
 \label{posssens}

Let us go back to the weakly  perturbed problem defined in \eqref{lamano} for some fixed  parameters $\bar{\mu} \in (L^{\infty,\infty}_{\FF})^{d}$ and $\bar{\sigma}  \in (L^{\infty,\infty}_{\FF})^{d\times n}$. We initially make the following assumption:\medskip\\
{{\bf {(H2)}}} The utility function has the form  $U(x)= px^{1/p}$  ($p\in (1,+\infty)$) if $x\geq 0$ and it is equal to $-\infty$ otherwise. \smallskip

We shall first prove first Theorem \ref{teoremarkmenossimpli} under this assumption, namely:
\begin{theorem}\label{teoremarksimpli} Assume {\bf(H2)}. Consider some perturbations $(\Delta \mu, \Delta \sigma) \in  (L^{\infty, \infty}_{\FF})^{d}\times (L^{\infty, \infty}_{\FF})^{d\times n}$ and suppose that {\bf(H1)} is satisfied for  $(\mu^{\tau}, \sigma^{\tau}):=(\bar{\mu}+ \tau \Delta \mu, \bar{\sigma}+\tau \Delta\sigma)$  and small enough $\tau$. Then, the directional derivative $Du^{w}(\bar{\mu},\bar{\sigma})(\Delta \mu, \Delta \sigma)$ exists and is given by
%
%
\begin{align*}\textstyle
D_{\mu}u^{{w}}(\bar{\mu},\bar{\sigma})\Delta \mu &=\textstyle \EE \left [ U(\bar{X}(T))  \int_0^T [\bar{\sigma}^{\top}[\bar{\sigma}\bar{\sigma}^{\top}]^{-1}\Delta \mu]^{\top}\dd W \right ], \\[4pt]
D_{\sigma}u^{{w}}(\bar{\mu},\bar{\sigma})\Delta \sigma &= \textstyle \EE \left [ U(\bar{X}(T))   \int_0^T \left\{\Delta\sigma^{\top}[\bar{\sigma}\bar{\sigma}^{\top}]^{-1}\bar{\mu}-  \bar{\sigma}^{\top} [\bar{\sigma}\bar{\sigma}^{\top}]^{-1}  [\bar{\sigma}\Delta\sigma^{\top} +\Delta\sigma\bar{\sigma}^{\top} ]  [\bar{\sigma}\bar{\sigma}^{\top}]^{-1}\bar{\mu}\right\}^{\top}\dd W  \right],
\end{align*}
where $\bar{X}(T)$ is the unique optimal terminal wealth attaining  $u(\bar{\mu},\bar{\sigma})$. Moreover, the application $(\mu, A) \in  (L^{\infty, \infty}_{\FF})^{d}\times (L^{\infty, \infty}_{\FF})^{d\times d}\mapsto  u^{w}(\mu, A (\bar{\sigma}\bar{\sigma}^{\top})^{-1}\bar{\sigma}) \in \RR$ is Hadamard differentiable at $(\bar{\mu},\bar{\sigma}\bar{\sigma}^{\top})$.
\end{theorem}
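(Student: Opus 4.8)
The plan is to exploit the representation \eqref{lamano}, which expresses $u^w(\lambda^\tau)$ as the support function of the fixed weakly compact convex set $\mathcal{C}_x:=\{Z\in L^0_+:J(Z)\leq x\}\subset L_J$, evaluated at the point $G(\lambda^\tau):=\mathcal{E}(\int(\lambda^\tau-\bar\lambda)^\top\dd W)_T$ living in the predual $E_I$ of $L_J$ (Theorem \ref{teoresumen}). In other words, writing $\phi(\rho):=\sup_{Z\in\mathcal{C}_x}\langle\rho,Z\rangle$ for the support function on $E_I$, we have $u^w(\lambda^\tau)=\phi\bigl(G(\lambda^\tau)\bigr)$. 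The strategy then decomposes into three ingredients: (i) differentiability of the support function $\phi$ at the relevant base point via an adaptation of Danskin's theorem (the appendix result); (ii) Fréchet directional differentiability of the Girsanov map $G:(L^{\infty,\infty}_{\FF})^d\to E_I$ at $\bar\lambda$, computing its derivative explicitly; and (iii) the chain rule for (Hadamard / directional) derivatives to compose the two, followed by a final substitution rule expressing everything in terms of $(\mu,\sigma)$ rather than $\lambda$.

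For step (i), the key point is that at $\tau=0$ the point $G(\bar\lambda)\equiv 1$ is such that the supremum in $\phi$ is attained at a \emph{unique} maximizer, namely $\bar Z:=U(\bar X(T))$ — this uniqueness was noted in the last remark of Section \ref{utilmax} and is what makes Danskin's theorem yield genuine differentiability (not merely one-sided sub/super-differentials). The adapted Danskin theorem from the appendix should give that $\phi$ is Hadamard directionally differentiable at $G(\bar\lambda)$ with $D\phi(G(\bar\lambda))[h]=\langle h,\bar Z\rangle=\EE[h\,U(\bar X(T))]$; one must check the compactness/semicontinuity hypotheses of that theorem, which are exactly the weak* compactness of $\mathcal{C}_x$ in $L_J$ (Banach–Alaoglu plus the modular estimates of \cite{BFrobustez1}) and the continuity of the pairing. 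Under \textbf{(H2)} the bound $U(x)=px^{1/p}\leq Cx^{1/p}$ trivially holds with $p$ the stated exponent, so the framework of Section \ref{utilmax} applies and $u^w$ is finite with attained, unique optimizer.

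For step (ii), I would compute, for a perturbation direction in $\lambda$ of the form $\Delta\lambda$ (later specialized to the $\lambda^\tau$ arising from $(\mu^\tau,\sigma^\tau)$), the derivative $\tfrac{d}{d\tau}\big|_{\tau=0}\mathcal{E}(\int(\lambda^\tau-\bar\lambda)^\top\dd W)_T = \int_0^T(\Delta\lambda)^\top\dd W$ — formally immediate from the stochastic exponential, but the real work is to justify this limit in the norm of $E_I$, i.e.\ to show $\|\tau^{-1}(G(\lambda^\tau)-1)-\int(\Delta\lambda)^\top\dd W\|_{E_I}\to 0$. This requires good $L^q$-type control (for the conjugate exponent $q=p/(p-1)$, since $V(y)\sim y^{q}$ in the positive-power case) on stochastic exponentials and their increments, using that $\Delta\mu,\Delta\sigma$ are essentially bounded so that $\lambda^\tau-\bar\lambda$ stays uniformly bounded for small $\tau$; the modular-space norm estimates of \cite[Section 5]{BFrobustez1} (or direct Orlicz/$L^q$ comparisons) are the tool. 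This is the \textbf{main obstacle}: establishing Fréchet-type differentiability of the Girsanov transform as a map into the rather abstract predual $E_I$, uniformly enough to feed the chain rule. Granting it, the chain rule gives $Du^w(\bar\mu,\bar\sigma)(\Delta\mu,\Delta\sigma)=\EE\bigl[U(\bar X(T))\int_0^T(\Delta\lambda^\tau)^\top\dd W\bigr]$ where $\Delta\lambda^\tau:=\tfrac{d}{d\tau}\big|_{\tau=0}\lambda^\tau$.

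Finally, for the explicit formulas and the Hadamard statement, I would carry out the elementary (matrix-calculus) computation of $\tfrac{d}{d\tau}\big|_{\tau=0}\lambda^\tau$ for $\lambda^\tau=(\sigma^\tau)^\top(\sigma^\tau(\sigma^\tau)^\top)^{-1}\mu^\tau$ with $(\mu^\tau,\sigma^\tau)=(\bar\mu+\tau\Delta\mu,\bar\sigma+\tau\Delta\sigma)$, using $\tfrac{d}{d\tau}(\sigma^\tau(\sigma^\tau)^\top)^{-1}=-[\bar\sigma\bar\sigma^\top]^{-1}[\bar\sigma\Delta\sigma^\top+\Delta\sigma\bar\sigma^\top][\bar\sigma\bar\sigma^\top]^{-1}$, which produces exactly the two displayed expressions $D_\mu u^w\,\Delta\mu$ and $D_\sigma u^w\,\Delta\sigma$ after splitting $\Delta\lambda^\tau$ into its $\Delta\mu$-linear and $\Delta\sigma$-linear parts. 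For the reparametrized map $(\mu,A)\mapsto u^w(\mu,A(\bar\sigma\bar\sigma^\top)^{-1}\bar\sigma)$: observe that with $\sigma=A(\bar\sigma\bar\sigma^\top)^{-1}\bar\sigma$ one has $\mathrm{Ker}(\sigma)=\mathrm{Ker}(\bar\sigma)$ automatically for $A$ near $\bar\sigma\bar\sigma^\top$ (Remark \ref{kernelstability}(ii)), and moreover $\sigma\sigma^\top=A(\bar\sigma\bar\sigma^\top)^{-1}A^\top$ and $\sigma^\top(\sigma\sigma^\top)^{-1}\mu=\bar\sigma^\top(\bar\sigma\bar\sigma^\top)^{-1}A^{-1}\mu$, so the induced market price of risk depends \emph{smoothly} (in the bounded-operator sense) and \emph{linearly-fractionally} on $(\mu,A)$ near $(\bar\mu,\bar\sigma\bar\sigma^\top)$; composing this smooth map with the already-established directional derivative of $\lambda\mapsto u^w(\lambda)$ and invoking that a directionally differentiable map precomposed with a continuously (Fréchet) differentiable one is Hadamard differentiable (the chain-rule lemma referenced via the appendix) gives the claimed Hadamard differentiability at $(\bar\mu,\bar\sigma\bar\sigma^\top)$.
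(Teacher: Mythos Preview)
Your proposal is correct and follows essentially the same route as the paper: the decomposition $u^w=F\circ g$ with $F$ the support function of the weak*-compact set $\{Z:J(Z)\le x\}$ (handled by the Danskin-type Lemma~\ref{lemmaenvelope}) and $g$ the Girsanov map into $L_I$ (whose Fr\'echet differentiability is the content of Lemma~\ref{lemg}), followed by the chain rule and the explicit matrix derivative of $\lambda(\mu,\sigma)=\sigma^\top(\sigma\sigma^\top)^{-1}\mu$. The only points to tidy up are notational: the Girsanov map takes values from $(L^{\infty,\infty}_{\FF})^n$ (not $(L^{\infty,\infty}_{\FF})^d$), and in your reparametrization computation the pseudo-inverse simplifies to $\sigma^\top(\sigma\sigma^\top)^{-1}\mu=\bar\sigma^\top A^{-1}\mu$ rather than $\bar\sigma^\top(\bar\sigma\bar\sigma^\top)^{-1}A^{-1}\mu$.
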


\smallskip
We denote by $q:=p/(p-1) \in (1,+\infty)$ the conjugate exponent of $p$.

\begin{remark}  \label{Ugral} {In the more general context of the previous section, we clearly  have that {\bf (H2)} implies  {\bf(A1)} and, thanks to Remark   \ref{atajo},  assumption {\bf(A2)} also holds true}.   
\end{remark}
%
{In the jargon of Section \ref{utilmax}, using the power-like form of the utility function we have that}
$$
\ba{rcl}
L_I &=&\textstyle \left\{Z\in L^0 : \inf_{\nu\in K(\bar{\sigma})}\EE \bigl [ \mathcal{E}(-\int [\bar{\lambda}+\nu]^{\top} \dd W)_T^{1-q} |Z|^q\bigr ]<\infty\right\},
\ea$$
where
$$K(\bar{\sigma}):=\left\{\nu\in L^2_{loc}(W):\nu(t,\omega)\in {\mbox{Ker}}(\bar{\sigma}(t,\omega))\mbox{ a.e. } \right\},$$
as easily follows from \cite[Proposition 3.2]{ZitLar} and the fact that we are working on the Brownian filtration.
In this context, we have that $L_I=E_I$ and {for some constant $C(p)\in \RR_{++}$}
\be\label{normaI}\textstyle \|Z\|_I:=\|Z\|_{I,\ell}= C(p)\left (\inf_{\nu\in K(\bar{\sigma})}\EE \left [\mathcal{E}\left(-\int [\bar{\lambda}+\nu]^{\top} \dd W\right)_T^{1-q} |Z|^q \right ] \right )^\frac{1}{q}.\ee
{Analogously}, 
\begin{eqnarray*}
L_J &=& \textstyle\bigl\{X\in L^0: \sup_{\nu\in K(\bar{\sigma})}\EE \left [\mathcal{E}\left(-\int [\bar{\lambda}+\nu]^{\top} \dd W\right)_T {|X|}^p \right ]<\infty\bigr\} ,
\end{eqnarray*}
we have that $L_J=E_J$ and {there exists a constant $c(p)\in \RR_{++}$ such that}
$$ \textstyle\|X\|_J:=\|X\|_{J,a}= c(p)\left (\sup_{\nu\in K(\bar{\sigma})}\EE \left [ \mathcal{E}\left(-\int [\bar{\lambda}+\nu]^{\top} \dd W\right)_T {|X|}^p\right ] \right )^\frac{1}{p}.$$ 
Since $c(p)$ and $C(p)$ play no role here, we shall ignore them.  We state now a simple lemma that we shall invoke more than once:
\begin{lemma}
\label{integrabilidad} The following assertions hold true: \smallskip\\
{\rm(i)}  Let $\rho\geq 2$, $A\in (L_{\FF}^{\infty, \infty})^{n}$,  $B$ progressive, $n$-dimensional, such that $\EE\left[\int_0^T|B_t|^{\rho}\dd t\right]<\infty$ and $Z$ defined as the real-valued process solving $\dd Z= (ZA+B)^{\top}\dd W $. Then, there exists a constant  $c=c(\rho,T)>0$ such that 
$$\textstyle \EE\bigl[\sup_{s\in[0,T]}\vert Z_s\vert^{\rho}\bigr]\leq c\bigl[ \vert Z_0\vert^{\rho}+ \EE\bigl[\int_0^T|B_t|^{\rho}\dd t\bigr] \bigr]\exp\left\{ cT\|A\|^{\rho}_{\infty,\infty} \right\}. $$
{\rm(ii)} For every $\Gamma\in [L^{\infty,\infty}_{\FF}]^n$ we have $\E\left( \int \Gamma^{\top}\dd W\right)_T\in L_I$.
\end{lemma}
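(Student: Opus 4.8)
The plan is to prove Lemma \ref{integrabilidad} in two parts, both of which are essentially reductions to standard stochastic-analysis estimates, although some care is needed about which norm ($L_I$) we are controlling and about the exponent $1-q<0$ appearing in the modular.

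For part (i), the strategy is a Gr\"onwall-type argument via the Burkholder--Davis--Gundy (BDG) inequality. Writing $Z_t=Z_0+\int_0^t(Z_sA_s+B_s)^{\top}\dd W_s$, I would apply BDG to get, for $\rho\geq 2$,
\begin{align*}
\EE\Bigl[\sup_{s\leq t}|Z_s|^{\rho}\Bigr] &\leq c_\rho\,|Z_0|^{\rho}+c_\rho\,\EE\Bigl[\Bigl(\int_0^t|Z_sA_s+B_s|^2\dd s\Bigr)^{\rho/2}\Bigr],
\end{align*}
then use $|Z_sA_s+B_s|^2\leq 2|Z_s|^2|A_s|^2+2|B_s|^2$, pull $\|A\|_{\infty,\infty}$ out, and apply H\"older/Jensen in the time variable on $[0,t]$ (of length $\leq T$) to obtain
\begin{align*}
\EE\Bigl[\sup_{s\leq t}|Z_s|^{\rho}\Bigr] &\leq c\,|Z_0|^{\rho}+c\,\EE\Bigl[\int_0^T|B_s|^{\rho}\dd s\Bigr]+c\,T^{\rho/2-1}\|A\|_{\infty,\infty}^{\rho}\int_0^t\EE\Bigl[\sup_{r\leq s}|Z_r|^{\rho}\Bigr]\dd s,
\end{align*}
where $c=c(\rho,T)$ may change line to line. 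Gr\"onwall's inequality applied to $t\mapsto \EE[\sup_{r\leq t}|Z_r|^{\rho}]$ then yields the claimed bound with the exponential factor $\exp\{cT\|A\|_{\infty,\infty}^{\rho}\}$. One should first justify that $\EE[\sup_{s\leq T}|Z_s|^{\rho}]<\infty$ so that Gr\"onwall applies; this follows by a localization/stopping-time argument (stop at $\tau_N=\inf\{t:|Z_t|\geq N\}$, run the estimate, let $N\to\infty$ by monotone convergence), which I would only sketch.

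For part (ii), the point is to show that $\E(\int\Gamma^{\top}\dd W)_T$ lies in $L_I=E_I$, i.e.\ that for every $\alpha>0$,
$$\textstyle \inf_{\nu\in K(\bar{\sigma})}\EE\Bigl[\mathcal{E}\bigl(-\int[\bar{\lambda}+\nu]^{\top}\dd W\bigr)_T^{1-q}\,\bigl|\alpha\,\E(\int\Gamma^{\top}\dd W)_T\bigr|^q\Bigr]<\infty,$$
and since the infimum is over a nonempty set it suffices to bound a single summand; the natural choice is $\nu\equiv 0$, giving the requirement $\EE[\mathcal{E}(-\int\bar{\lambda}^{\top}\dd W)_T^{1-q}\,\E(\int\Gamma^{\top}\dd W)_T^{q}]<\infty$ (the factor $\alpha^q$ is harmless). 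Both stochastic exponentials are positive and are, up to a bounded deterministic factor, exponentials of Gaussian-type stochastic integrals with $(L^{\infty,\infty}_{\FF})$ integrands; concretely $\mathcal{E}(-\int\bar{\lambda}^{\top}\dd W)_T^{1-q}\E(\int\Gamma^{\top}\dd W)_T^{q}=\exp\{\int_0^T\theta_t^{\top}\dd W_t-\frac12\int_0^T|\theta_t|^2\dd t+R_t\}$ after collecting terms, where $\theta=-(1-q)\bar{\lambda}+q\Gamma$ is essentially bounded and $R$ is a bounded process (coming from the mismatch between the quadratic terms and the true quadratic variation). One then bounds $\EE[\exp\{\int_0^T\theta^{\top}\dd W-\frac12\int|\theta|^2\}]$ by first passing to $\EE^{\QQ}[e^{\frac12\int_0^T|\theta_t|^2\dd t}]$ under the measure $\dd\QQ=\mathcal{E}(\int\theta^{\top}\dd W)_T\dd\PP$, which is finite because $\theta$ is essentially bounded so $\frac12\int_0^T|\theta_t|^2\dd t\leq \frac12 T\|\theta\|_{\infty,\infty}^2$ is deterministic; equivalently one can invoke directly that $\mathcal{E}(\int\theta^{\top}\dd W)_T$ has moments of all orders when $\theta\in(L^{\infty,\infty}_{\FF})^n$ (a well-known consequence of Novikov, or of part (i) applied with $A=\theta$, $B=0$, $\rho$ arbitrary). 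Care must be taken that $1-q<0$, so $\mathcal{E}(-\int\bar{\lambda}^{\top}\dd W)_T^{1-q}$ is a \emph{negative} power of a stochastic exponential; but this is again just $\exp\{(q-1)\int\bar{\lambda}^{\top}\dd W+\cdots\}$ with essentially bounded integrand, so the same moment estimate applies.

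The main obstacle I anticipate is purely bookkeeping in part (ii): correctly rewriting the product of (a negative power of) one Dol\'eans--Dade exponential and (a power of) another as a single $\exp\{\text{martingale}-\frac12\langle\cdot\rangle+\text{bounded}\}$, keeping track of the sign and size of the resulting drift $\theta$ and of the bounded remainder, so that one can cleanly invoke the all-moments property of exponential martingales with bounded integrands. Part (i) is a textbook BDG + Gr\"onwall argument and should present no real difficulty beyond the standard localization preamble.
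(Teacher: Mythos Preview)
Your proposal is correct and follows essentially the same route as the paper: part (i) is dispatched by BDG plus Gr\"onwall (the paper simply cites this as standard), and part (ii) proceeds by taking $\nu\equiv 0$ in the definition of $\|\cdot\|_I$, combining the two exponential factors into a single $\mathcal{E}\bigl(\int(q\Gamma+(q-1)\bar{\lambda})^{\top}\dd W\bigr)_T$ times a bounded correction, and then using that this Dol\'eans--Dade exponential has expectation at most one. The only difference is cosmetic: the paper writes this last step in one line as $\alpha\leq c\,\EE[\mathcal{E}(\int(q\Gamma+(q-1)\bar{\lambda})^{\top}\dd W)_T]=c$, whereas you detour through a change of measure that is not actually needed (the expectation of the exponential martingale is already $\leq 1$ once the bounded remainder is pulled out).
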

\begin{proof} {The proof of the first assertion is a standard application of Gronwall's Lemma (see e.g. \cite[Chapter 6, Section 4]{YongZhou})}. 
%
%
For the second point, using that $\bar{\lambda}$ and $\Gamma$ are essentially bounded,  we observe that $$ \textstyle\alpha:=\EE \bigl [ \E\left(\int \Gamma^{\top}\dd W\right)_T^q\exp\bigl\{\int_0^T  (q-1)\bar{\lambda}^{\top} \dd W + \frac{q-1}{2}\int_0^T  |\bar{\lambda}|^2 \dd t\bigr \} \bigr ],$$ satisfies
$$\textstyle\alpha\leq c \EE\left [\E\left(\int(q\Gamma+(q-1)\bar{\lambda})^{\top}\dd W\right)_T\right]=c,$$
for some constant $c>0$. Since $\alpha$ dominates $\|\E\left( \int \Gamma^{\top}\dd W\right)_T \|_I^q$, the result follows.
\end{proof}\smallskip

\normalsize
Our aim now is to study the differentiability of $$\lambda \in \P \mapsto {u^{w}}(\lambda)\in \RR.$$  {First, let us define $g: (L_{\FF}^{\infty,\infty})^{n} \to L_I$ as}
$$\textstyle g(\lambda):= \mathcal{E}\left(\int [\lambda-\bar{\lambda}]^{\top} \dd W \right)_T.$$
Lemma \ref{integrabilidad}{\rm(ii)} implies that $g$ is well-defined.  We  prove now the Fr\'echet differentiability of $g$:

\begin{lemma}\label{lemg}
The map $g$ is  locally Lipschitz and  Fr\'echet differentiable. Moreover,  {for all $\Delta \lambda \in   (L_{\FF}^{\infty,\infty})^{n}$} we have that
\be\label{expresionderivadagateaux} \textstyle
Dg(\lambda)\Delta \lambda = \mathcal{E}\left(\int [\lambda-\bar{\lambda}]^{\top} \dd W \right)_T \left\{ \int_0^T \Delta\lambda_t^{\top}\dd W_t - \int_0^T   {(\lambda_t-\bar{\lambda}_t)\cdot\Delta \lambda_t}  \dd t \right\}.
\ee
\end{lemma}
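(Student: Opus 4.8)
The plan is to differentiate $g$ by treating the stochastic exponential as a composition of two ingredients: the linear map $\lambda \mapsto \int_0^T[\lambda-\bar\lambda]^\top \dd W - \frac12\int_0^T|\lambda-\bar\lambda|^2\dd t$, and the exponential function, exploiting that everything lives in the spaces $L_I=E_I$ whose norm \eqref{normaI} is, up to the constant $C(p)$, an $L^q$-type norm weighted by the stochastic exponential of $-\int[\bar\lambda+\nu]^\top\dd W$. First I would fix $\lambda\in(L^{\infty,\infty}_\FF)^n$ and a direction $\Delta\lambda\in(L^{\infty,\infty}_\FF)^n$, write $\mathcal E(\int[\lambda+\e\Delta\lambda-\bar\lambda]^\top\dd W)_T$ explicitly as $\exp\{\int_0^T[\lambda+\e\Delta\lambda-\bar\lambda]^\top\dd W - \frac12\int_0^T|\lambda+\e\Delta\lambda-\bar\lambda|^2\dd t\}$, and expand the exponent to first order in $\e$, the linear term being precisely $\e(\int_0^T\Delta\lambda_t^\top\dd W_t - \int_0^T(\lambda_t-\bar\lambda_t)\cdot\Delta\lambda_t\dd t)$. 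This identifies the candidate Gâteaux derivative \eqref{expresionderivadagateaux}; the remaining work is to upgrade pointwise (in $\omega$) Taylor expansion to convergence of difference quotients in the $L_I$-norm, uniformly for $\lambda$ in a bounded set, which yields both the local Lipschitz property and Fréchet differentiability.

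The key steps, in order, would be: (i) record the elementary inequality $|e^{a+h}-e^a-e^a h|\le e^{a}(e^{|h|}-1-|h|)\le \tfrac12 e^{a+|h|}h^2$ and, more crudely, $|e^{a+h}-e^a|\le e^{a+|h|}|h|$, applied pointwise with $a$ the exponent at $\lambda$ and $h$ the increment of the exponent; (ii) bound all the random ``increment'' factors — namely $\int_0^T(\e\Delta\lambda)^\top\dd W$, the mixed and quadratic time-integrals, and their exponentials — in every $L^r(\PP)$, uniformly over $\lambda$ ranging in a fixed ball of $(L^{\infty,\infty}_\FF)^n$; here Novikov/Burkholder–Davis–Gundy estimates for exponentials of bounded-integrand stochastic integrals, and the essential boundedness of $\bar\lambda$, $\lambda$, $\Delta\lambda$, are what make these moments finite and locally uniform; (iii) recall from \eqref{normaI} that $\|Z\|_I^q \asymp \inf_{\nu\in K(\bar\sigma)}\EE[\mathcal E(-\int[\bar\lambda+\nu]^\top\dd W)_T^{1-q}|Z|^q]\le \EE[\mathcal E(-\int\bar\lambda^\top\dd W)_T^{1-q}|Z|^q]$ (taking $\nu=0$, which is always admissible since $0\in\mathrm{Ker}(\bar\sigma)$), so it suffices to control a single weighted $L^q(\PP)$ norm with the fixed, essentially bounded weight $\mathcal E(-\int\bar\lambda^\top\dd W)_T^{1-q}\cdot$; (iv) combine (i)–(iii) via Hölder's inequality in $\PP$ to bound $\|g(\lambda+\e\Delta\lambda)-g(\lambda)-\e Dg(\lambda)\Delta\lambda\|_I$ by $o(\e)\|\Delta\lambda\|_{\infty,\infty}$ uniformly, and similarly $\|g(\lambda')-g(\lambda)\|_I\le L\|\lambda'-\lambda\|_{\infty,\infty}$ on bounded sets; (v) check that $\Delta\lambda\mapsto Dg(\lambda)\Delta\lambda$ is a bounded linear operator $(L^{\infty,\infty}_\FF)^n\to L_I$ (linearity is clear; boundedness follows from the same moment estimates applied to the single factor $g(\lambda)$ times a linear functional of $\Delta\lambda$), which promotes Gâteaux to Fréchet differentiability given the uniformity already obtained.

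The main obstacle I expect is step (ii)–(iii): making all the exponential moment bounds genuinely \emph{uniform} in $\lambda$ over a neighbourhood, and doing so against the weight $\mathcal E(-\int\bar\lambda^\top\dd W)_T^{1-q}$ with $1-q<0$, so that one is integrating a product of exponentials of stochastic integrals whose combined integrand must be controlled. The clean way around this is to absorb everything into a single stochastic exponential: for essentially bounded integrands $\Gamma_1,\dots,\Gamma_k$ and real coefficients, $\EE[\prod_i \mathcal E(\int\Gamma_i^\top\dd W)_T^{c_i}]$ is finite and bounded in terms of $T$ and $\sum_i|c_i|\,\|\Gamma_i\|_{\infty,\infty}$, because $\prod_i\mathcal E(\int\Gamma_i^\top\dd W)_T^{c_i}$ equals $\mathcal E(\int(\sum_i c_i\Gamma_i)^\top\dd W)_T$ times $\exp$ of a bounded quadratic-covariation correction — exactly the computation already used in the proof of Lemma \ref{integrabilidad}(ii), with the bound $\alpha\le c\,\EE[\mathcal E(\int(q\Gamma+(q-1)\bar\lambda)^\top\dd W)_T]=c$. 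Replaying that device with $\Gamma$ replaced by the relevant combinations of $\lambda-\bar\lambda$, $\Delta\lambda$, and $\bar\lambda$ turns every expectation appearing above into an explicit exponential of a bounded deterministic quantity, and the local uniformity in $\lambda$ and linear dependence on $\|\Delta\lambda\|_{\infty,\infty}$ fall out immediately. All other steps are then routine applications of Hölder's inequality and the elementary exponential inequalities from (i).
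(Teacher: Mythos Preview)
Your proposal is correct and takes a genuinely different, more direct route than the paper. The paper proceeds in three separate stages: it obtains the local Lipschitz property by writing $\Delta g:=g(\lambda_1)-g(\lambda_2)$ as the solution of a linear SDE and invoking the moment estimate of Lemma~\ref{integrabilidad}(i); it then proves G\^ateaux differentiability using the integral Taylor remainder $e^x=1+x+x\int_0^1[e^{ax}-1]\dd a$ together with dominated convergence; finally it upgrades to Fr\'echet by a separate argument showing $\lambda\mapsto Dg(\lambda)$ is continuous in $\mathcal L((L^{\infty,\infty}_\FF)^n,L_I)$. You instead bypass both the SDE estimate and the continuity-of-the-derivative step by working with the pointwise second-order inequality $|e^{a+h}-e^a-e^ah|\le \tfrac12 e^{a+|h|}h^2$, combined with the ``collapse into a single stochastic exponential'' device (exactly the computation behind Lemma~\ref{integrabilidad}(ii)) to control all the exponential moments uniformly over bounded $\lambda$; this yields a quadratic remainder bound in $L_I$ directly, from which Fr\'echet differentiability and local Lipschitzness both follow in one stroke. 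Your approach is more elementary and shorter; the paper's route has the minor advantage of explicitly recording $C^1$-regularity of $g$ via the operator-norm continuity of $Dg$, and its SDE viewpoint meshes with later arguments (e.g.\ the second-order bound in Proposition~4.2), but nothing in the proof of the lemma itself requires that machinery.
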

\begin{proof} Let $\lambda_{1}$, $\lambda_2 \in (L_{\FF}^{\infty,\infty})^{n}$. We have that, omitting the dependence on $t$ and {denoting by $\|\cdot\|_2$ the $L^2$-norm with respect to $\PP$}, 
\be\label{locallipschitz}\textstyle
\| g(\lambda_1)- g(\lambda_2) \|_{I}^{q}\leq \left\| e^{\int_{0}^{T} (q-1)\bar{\lambda} \dd W + \frac{q-1}{2} \int_{0}^{T} |\bar{\lambda}|^2 \dd t}  \right\|_{2}\left\| |{(g(\lambda_1)-g(\lambda_2)})_T|^{q} \right\|_{2}.
\ee
{Note that   $\Delta g:=g(\lambda_1)-g(\lambda_2)$} solves
$$\dd \Delta g= \left[\Delta g (\lambda_1-\bar{\lambda})+ g(\lambda_2)(\lambda_1-\lambda_2)\right]^{\top} \dd W_t, \; \; t\in [0,T], \; \;  \Delta g_0=0,$$
and so the local Lipschitz property follows from Lemma \ref{integrabilidad}  and \eqref{locallipschitz}.  Let us prove that $g$ is G\^ateaux differentiable. Take $\lambda$ and call ${\bar{\Lambda}}=\lambda-\bar{\lambda}$ and $\lambda^{\epsilon}:=\bar{\Lambda} + \epsilon {\Delta \lambda}$. We see that
$$\textstyle \E\left(\int[\lambda^{\epsilon}]^{\top}\dd W\right)=\E\left(\int\bar{\Lambda}^{\top}\dd W\right)\exp\bigl\{\epsilon\int \Delta \lambda^{\top}\dd W-\epsilon \int\Delta \lambda \cdot\bar{\Lambda}\dd t -\frac{\epsilon^2}{2} \int \vert \Delta \lambda\vert^ 2\dd t\bigr\}.$$
Using that $e^x=1+x+x \int_0^1[e^{ax}-1] \dd a $ and calling $x_{\epsilon}$ the term inside $\exp\{\dots\}$ in the expression above, 
we obtain
$$\ba{rcl}
\frac{\E(\int[\lambda^{\epsilon}]^{\top}\dd W)-\E(\int\bar{\Lambda}^{\top}\dd W)}{\epsilon}&=& \E\left(\int\bar{\Lambda}^{\top}\dd W\right)\left[\int \Delta \lambda^{\top}\dd W- \int\Delta \lambda\cdot\bar{\Lambda}\dd t -\frac{\epsilon}{2} \int \vert \Delta \lambda\vert^ 2\dd t\right]\\[6pt]
\; & \; &+\epsilon^{-1}{x_\epsilon}\E\left(\int\bar{\Lambda}^{\top}\dd W\right)\int_0^1[e^{a{x_\epsilon}}-1]\dd a.
\ea$$
In order to show \eqref{expresionderivadagateaux}, it suffices to prove that $\|\E\left(\int\bar{\Lambda}^{\top}\dd W\right)\int \vert \Delta \lambda \vert^2\dd t\|_I<\infty$ and  
$$\textstyle\epsilon^{-1}\bigl\|{x_\epsilon}\E\left(\int\bar{\Lambda}^{\top}\dd W\right)\int_0^1[e^{a{x_\epsilon}}-1]\dd a\bigr\|_I\rightarrow 0  \hspace{0.3cm} \mbox{{as $\epsilon \to 0$}.}$$
The first claim is trivial, as $\Delta \lambda \in (L_{\FF}^{\infty,\infty})^{n}$ and $g({\lambda}) \in L_I$. For the second one, letting  $\nu\equiv 0$ {in \eqref{normaI}},   it suffices  to estimate  
$$\textstyle \EE\left [e^{\int (q-1)\bar{\lambda}^{\top}\dd W + \frac{(q-1)}{2}\int \vert \lambda \vert^2\dd t} \E\left(\int\bar{\Lambda}^{\top}\dd W\right)^q\left(\frac{{x_\epsilon}}{\epsilon}\right)^{q}  \bigl (\int_0^1[e^{a{x_\epsilon}}-1]\dd a\bigr )^q \right ],$$  
which we may bound from above by the product of
$$\textstyle\sqrt{\EE\bigl [\E\left(\int\bar{\Lambda}^{\top}\dd W\right)^{2q}\left[\int \Delta \lambda^{\top}\dd W-\int\Delta \lambda\cdot\bar{\Lambda}\dd t -\frac{\epsilon}{2} \int \vert \Delta \lambda\vert^ 2\dd t \right]^{2q}\bigr ]},$$
and 
$$\textstyle\sqrt{\EE\bigl [e^{2(q-1)\int \bar{\lambda}^{\top}\dd W + (q-1)\int \vert \bar{\lambda} \vert^2\dd t}\bigl (\int_0^1[e^{a{x_\epsilon}}-1]\dd a\bigr )^{2q} \bigr ]}.$$
Using the Cauchy-Schwartz and the {Burkholder-Davis-Gundy (BDG)} inequalities we have that the first term is finite. As for the second one, in order  to prove that it converges to zero it suffices to show that $\EE\left [ \int_0^1{\left|e^{a{x_\epsilon}}-1\right|}^{4q}\dd a \right]\rightarrow 0$. The term within the integral converges a.e.\ to zero as $\epsilon\to 0$. On the other hand, for some $c>0$, 
$$\textstyle |e^{ax_\epsilon}-1|^{4q}\leq c\bigl\{ 1 + e^{4q\vert \int \bar{\Lambda}\cdot\Delta \dd t\vert+4qa\epsilon \int\Delta^{\top}\dd W }\bigr\},$$
and $e^{4qa\epsilon \int\Delta^{\top}\dd W }\leq e^{4q \int\Delta^{\top}\dd W  }+ 1 $, which is integrable. {Thus, by dominated convergence, we have that \eqref{expresionderivadagateaux} holds true.}
	
{In order to prove Fr\'echet differentiability it suffices to show the continuity of the application  $\lambda \in  (L_{\FF}^{\infty,\infty})^{n} \mapsto Dg(\lambda)(\cdot)\in \mathcal{L}((L^{\infty,\infty}_{\FF})^n,L_I)$, where $\mathcal{L}((L^{\infty,\infty}_{\FF})^n,L_I)$ denotes the space of linear bounded operators from $(L^{\infty,\infty}_{\FF})^n$  to $L_I$}. Let $\Gamma$, $\lambda \in  (L^{\infty,\infty}_{\FF})^n$ and $\Delta \lambda\in (L^{\infty,\infty}_{\FF})^n$ such that  $\|\Delta \lambda\|_{\infty,\infty}\leq 1$. The triangle inequality yields 
\be\label{eqdegees}\ba{ll}\textstyle
\|[Dg(\lambda)-Dg(\Gamma)]\Delta \lambda \|_I \leq& \left\| \left[ \E\left (\int (\lambda-\bar{\lambda})^{\top}\dd W\right )-\E\left (\int (\Gamma-\bar{\lambda})^{\top}\dd W\right ) \right ]\int\Delta \lambda^{\top}\dd W \right\|_I\\[6pt]
 \; & \textstyle+\left\| \E\left (\int (\lambda-\bar{\lambda})^{\top}\dd W\right )\int\Delta \lambda\cdot (\lambda-\Gamma)\dd t \right\|_I\\[6pt]
\; & \textstyle+\left\|\left[ \E\left (\int (\lambda-\bar{\lambda})^{\top}\dd W\right )-\E\left (\int (\Gamma-\bar{\lambda})^{\top}\dd W\right ) \right ]\int\Delta \lambda\cdot (\Gamma-\bar{\lambda})\dd t \right\|_I.
\ea\ee\normalsize
Up to taking $q$-root, the first and the third r.h.s\ terms can be bounded above, through repeated Cauchy-Schwartz, by
$$\textstyle \sqrt{\EE\left [e^{ 4(q-1)\int \bar{\lambda}^{\top}\dd W + 2(q-1) \int \vert \bar{\lambda} \vert^2\dd t}\right ]}\sqrt[4]{\EE \left [{\left|\int \Delta \lambda^{\top}\dd W\right|^{4q}} \right]} \sqrt[4]{\EE \left [\left|Z_\Gamma-Z_\lambda\right|^{4q} \right]},$$
and
$$\textstyle \sqrt{\EE\left [e^{ 4(q-1)\int \bar{\lambda}^{\top}\dd W + 2(q-1)\int \vert \bar{\lambda} \vert^2\dd t}\right ]}\sqrt[4]{\EE \left [\left|\int \Delta \lambda \cdot[\Gamma-\bar{\lambda}]\dd t\right|^{4q} \right]} \sqrt[4]{\EE \left [\left|Z_\Gamma-Z_\lambda\right|^{4q} \right]},$$
where $Z_\Gamma:=\mathcal{E}\left(\int [\Gamma-\bar{\lambda}]^{\top} \dd W \right)_T$ and $Z_\lambda:=\mathcal{E}\left(\int [\lambda-\bar{\lambda}]^{\top} \dd W \right)_T$. As in the proof of the local Lipschitzianity of $g$, we get that the last term in both expressions above tends to zero. Therefore, the {BDG inequality implies} that the first and third terms in \eqref{eqdegees} tend to zero uniformly w.r.t. $\Delta \lambda$ satisfying that $\|\Delta \lambda\|_{\infty,\infty}\leq 1$. 
%
Finally, 
$$\textstyle \left\| \E\left (\int (\lambda-\bar{\lambda})^{\top}\dd W\right )\int\Delta \lambda\cdot (\lambda-\Gamma)\dd t \right\|_I\leq T\|\lambda-\Gamma\|_{\infty,\infty}\left\| \E\left (\int (\lambda-\bar{\lambda})^{\top}\dd W\right )\right\|_I.$$
The result follows. 
%
\end{proof}

{Using the above fact we prove the stability (continuity) and the Hadamard differentiability of $u^{w}$ as a function of the market price of risk $\lambda$. The reader is referred to the appendix for the definition of Hadamard directionally differentiable maps.  Some parts of the following proof are independent of the choice of utility function, pointing out that we may in the future extend our approach:}

\begin{proposition}\label{propopreli}
The function $u^{w}:(L^{\infty,\infty}_{\FF})^n\rightarrow \RR_+$ 
is continuous, {G\^ateaux} and  Hadamard directionally differentiable. Denoting by $X[\lambda]_T$ the optimal final wealth associated to $u^{w}(\lambda)$, which is unique,   for all $\Delta \lambda \in (L^{\infty,\infty}_{\FF})^n$ the directional derivative is given by 
\be\label{derivadadireccionalconellambda}\textstyle Du^w(\lambda)\Delta\lambda=\EE \left [  \mathcal{E}\left(\int [\lambda-\bar{\lambda}]^{\top} \dd W \right)_T U(X[\lambda]_T)\left\{ \int_0^T \Delta\lambda^{\top}\dd W - \int_0^T   (\lambda-\bar{\lambda})\cdot \Delta \lambda \dd t \right\} \right ].\ee
%


\end{proposition}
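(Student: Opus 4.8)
The plan is to express $u^w$ as the composition of the Girsanov map $g$ (whose Fréchet differentiability was just established in Lemma \ref{lemg}) with the convex-analytic support function
\[
\sigma_{\mathcal C}(z) := \sup\{\EE[zZ] \,;\, Z\in\mathcal C\}, \qquad \mathcal C := \{Z\in L_J : J(Z)\leq x,\, Z\geq 0\},
\]
evaluated at the element $z = g(\lambda)\in L_I = E_I = (L_J)_*$. Indeed, by the rewriting \eqref{lamano} we have $u^w(\lambda) = \sigma_{\mathcal C}(g(\lambda))$, where the duality pairing $\langle g(\lambda), Z\rangle = \EE[\mathcal E(\int(\lambda-\bar\lambda)^\top\dd W)_T\, Z]$ is legitimate by Theorem \ref{teoresumen}. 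The set $\mathcal C$ is convex and bounded in $L_J$, hence (since $L_J$ is a strong dual space by Theorem \ref{teoresumen}) it is relatively weak* compact by Banach--Alaoglu; one checks it is weak* closed as well, using that $J$ is convex and weak* lower semicontinuous (this is essentially the content of \cite[Proposition 5.22]{BFrobustez1}). So $\mathcal C$ is weak* compact, nonempty (it contains $0$ and, by the budget constraint, the optimal $\bar Z = U(\bar X(T))$), and the supremum defining $\sigma_{\mathcal C}$ is attained.

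First I would invoke the adaptation of Danskin's theorem for support functions proved in the appendix: for a weak* compact convex set $\mathcal C$ in the dual of a Banach space $E$, the map $z\mapsto \sigma_{\mathcal C}(z)$ is convex, continuous, and Hadamard directionally differentiable on $E$, with
\[
D\sigma_{\mathcal C}(z)(h) = \sup\{\langle h, Z\rangle \,;\, Z\in \arg\max_{\mathcal C}\langle z,\cdot\rangle\}.
\]
In our case the argmax set at $z=g(\lambda)$ is the singleton $\{X[\lambda]_T\}$ by strict concavity of $U$ (the optimizer lies in the image of $U$, as remarked after Theorem \ref{teoresumen}), so $\sigma_{\mathcal C}$ is in fact Gâteaux differentiable at $g(\lambda)$ with derivative $h\mapsto \langle h, X[\lambda]_T\rangle$. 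Then I would apply the chain rule: $g$ is Fréchet differentiable at $\lambda$ (Lemma \ref{lemg}) with derivative $Dg(\lambda)$ given by \eqref{expresionderivadagateaux}, and $\sigma_{\mathcal C}$ is Hadamard directionally differentiable at $g(\lambda)$; the composition of a Fréchet-differentiable map with a Hadamard directionally differentiable one is Hadamard directionally differentiable, with
\[
Du^w(\lambda)(\Delta\lambda) = D\sigma_{\mathcal C}(g(\lambda))\big(Dg(\lambda)\Delta\lambda\big) = \big\langle Dg(\lambda)\Delta\lambda,\, X[\lambda]_T\big\rangle.
\]
Substituting \eqref{expresionderivadagateaux} and using $\langle \cdot,\cdot\rangle = \EE[\cdot\,\cdot]$ yields exactly \eqref{derivadadireccionalconellambda}. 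Continuity of $u^w$ follows from continuity of $g$ (local Lipschitzness, Lemma \ref{lemg}) composed with continuity of $\sigma_{\mathcal C}$.

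The main obstacle is verifying that the hypotheses of the abstract Danskin-type theorem genuinely apply, i.e.\ that $\mathcal C$ is weak* compact in $L_J$ (not merely bounded): one must confirm that $\{Z\geq 0 : J(Z)\leq x\}$ is weak* closed. This requires knowing that $J$ is weak* lower semicontinuous on $L_J$; since $J$ is convex and (by the theory of modular spaces) norm-lower-semicontinuous, this reduces to checking it is sequentially weak* lsc, which is where \cite[Proposition 5.22]{BFrobustez1} does the real work — it is precisely the compactness argument replacing Komlós' theorem that was advertised in Section \ref{utilmax}. A secondary technical point is to make sure $g(\lambda)$ lands in $E_I$, the \emph{predual} of $L_J$, so that the pairing is the correct one; but this is exactly Lemma \ref{integrabilidad}(ii) together with $L_I = E_I$ in our power-utility setting. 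One must also double-check that the ``argmax is a singleton'' claim is valid: the optimizer $\hat Z$ of $\EE[g(\lambda)Z]$ over $\mathcal C$ must satisfy $J(\hat Z) = x$ (else one could scale up) and, being the value $U$ of the optimal wealth, is unique by strict concavity of $U$ — so $\sigma_{\mathcal C}$ is classically differentiable at $g(\lambda)$, which upgrades the conclusion from one-sided directional to genuine Gâteaux differentiability of $u^w$.
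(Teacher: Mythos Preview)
Your approach is essentially identical to the paper's: write $u^w=F\circ g$ with $F$ the support function of the weak*-compact set $\{Z\in L_J^+:J(Z)\le x\}$, invoke the appendix Danskin lemma for Hadamard directional differentiability of $F$, and chain-rule with the Fr\'echet differentiability of $g$ from Lemma~\ref{lemg}. One notational slip to fix: the argmax of $\sigma_{\mathcal C}$ at $g(\lambda)$ is $U(X[\lambda]_T)$, not $X[\lambda]_T$ itself (the constraint is $J(Z)\le x$ with $Z=U(X_T)$), so your pairing should read $\langle Dg(\lambda)\Delta\lambda,\,U(X[\lambda]_T)\rangle$; with that correction the substitution indeed gives \eqref{derivadadireccionalconellambda}. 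The paper's only extra wrinkle is that it verifies G\^ateaux differentiability by bounding $|Du^w(\lambda)\Delta\lambda|$ via the modular H\"older inequality and $\|Z(\lambda)\|_J\le 1+x$, whereas you deduce it directly from the singleton argmax --- both arguments are fine.
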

\begin{proof} We have seen in \eqref{lamano} that  $u^{w}(\lambda)=\sup\left\{ \EE\left[g(\lambda) Z\right ]: Z\in L_J^+, J(Z)\leq x \right\}$. Define   $L_I \ni Y\mapsto F(Y):= \sup\left\{ \EE\left[Y Z\right ]: Z\in L_J^+, J(Z)\leq x \right\} \in \RR$, so that $u^w=F \circ g$. Theorem \ref{teoresumen} {and the  Banach-Alaoglu theorem}  imply  that the set $\{Z\in L_J^+: J(Z)\leq x\}$ is   weak* compact. Thus, Lemma \ref{lemmaenvelope}{\rm(ii)} in the appendix implies that $F$ is Hadamard directionally differentiable. So Lemma \ref{lemg}  and the chain rule in \cite[Theorem 2.28]{Penot} imply that $u^{w}$ is Hadamard directionally differentiable.
Its directional derivative is given by 
$$\textstyle Du^{w}(\lambda)\Delta\lambda=\EE \left [ Z(\lambda)\mathcal{E}\left(\int [\lambda-\bar{\lambda}]^{\top} \dd W \right)_T \left\{ \int_0^T \Delta\lambda^{\top}\dd W - \int_0^T   (\lambda-\bar{\lambda})\cdot \Delta \lambda\dd t \right\} \right ],$$
with $Z(\lambda)=U(X[\lambda]_T)$. Using H\"older's inequality in \cite[Proposition 5.9]{BFrobustez1} we bound 
$$\textstyle |Du^{w}(\lambda)\Delta\lambda|\leq \left\|Z(\lambda)\right\|_J\left\| \mathcal{E}\left(\int [\lambda-\bar{\lambda}]^{\top} \dd W \right)_T \left\{ \int_0^T \Delta\lambda^{\top}\dd W - \int_0^T \langle \lambda-\bar{\lambda},\Delta \lambda \rangle\dd t \right\} \right\|_I.$$
Taking $k=1$ in \eqref{normasellya} and using that $J(Z(\lambda)) \leq x$, we obtain that  $\left\|Z(\lambda)\right\|_J\leq 1+x$. The second term in the expression above is {uniformly} bounded whenever $\Delta\lambda$ is taken in a bounded set (as in the proof in Lemma \ref{lemg}). Thus,  $Du^{w}(\lambda) (\cdot)$ is linear and continuous and so $ u^{w}$ is G\^ateaux differentiable.
\end{proof}

We can now prove Theorem \ref{teoremarksimpli}
\smallskip
\begin{proof}[Proof of Theorem \ref{teoremarksimpli}]
By {\bf (H1)} we have that $(\bar{\sigma}\bar{\sigma}^{\top})^{-1}$ is essentially bounded. Thus,   
{$$(\mu,\sigma)\in (L^{\infty,\infty}_{\FF})^{d} \times (L^{\infty,\infty}_{\FF})^{d\times n} \mapsto \lambda(\mu, \sigma):= \sigma^{\top}[\sigma\sigma^{\top}]^{-1} \mu,$$}
is Fr\'echet differentiable at $(\bar{\mu},\bar{\sigma})$ and its directional derivative is given by
\begin{eqnarray*}\textstyle
D\lambda(\mu,\sigma) (\Delta\mu,\Delta\sigma)&=&\textstyle \sigma^{\top}[\sigma\sigma^{\top}]^{-1} \Delta \mu +
 \Delta\sigma^{\top}[\sigma\sigma^{\top}]^{-1}\mu \\
&& \textstyle -  \sigma^{\top} [\sigma\sigma^{\top}]^{-1}  \left\{\sigma\Delta\sigma^{\top} +\Delta\sigma\sigma^{\top} \right\}  [\sigma\sigma^{\top}]^{-1}\mu.
\end{eqnarray*}
The result easily follows from Porposition \ref{propopreli}  and the chain rule in \cite[Theorem 2.28]{Penot}.
%
%
%
\end{proof}

\medskip

%
%

%

We now lift Assumption {\bf (H2)} and prove our main result Theorem  \ref{teoremarkmenossimpli}:

\begin{proof}[Proof of Theorem \ref{teoremarkmenossimpli} ]
We let $\bar{U}(x)=Cx^{1/p}$ and $\bar{V}$ its conjugate. Then for some other constant $c$ we have $V(y)\leq \bar{V}(y)= cy^{1/(1-p)}$ and so $zV(y/z)\leq cy^{1/(1-p)} z^{p/(p-1)}$. Writing $L_I$ for the modular space associated with $zV(y/z)$ and $L_{\bar{I}}$ for the one associated with $cy^{1/(1-p)} z^{p/(p-1)}$ (as it has been described throughout most of this section) we conclude that $L_{\bar{I}}\subset L_I$ with continuous injection. Let $i:L_{\bar{I}}\to L_I$ be the identity map, which is then linear continuous and thus Fr\'echet differentiable with $Di=i$. In particular $G:(L^{\infty,\infty}_{\FF})^n\to L_I$ given by  $ G(\lambda):= \mathcal{E}\left(\int [\lambda-\bar{\lambda}]^{\top} \dd W \right)_T$ is well defined, and we have $G=i\circ g$ with $g$ as before. By Lemma \ref{lemg} we conclude that $G$ is loc.\ Lipschitz and Fr\'echet differentiable with the same derivative as in \eqref{expresionderivadagateaux}. One can then argue as in Proposition \ref{propopreli} and the proof of Theorem \ref{teoremarkmenossimpli} to conclude.
\end{proof}

\begin{remark}
Note that the proof provides the Hadamard differentiability for the natural extension of $u^{w}$ to  {$ \mathcal{P}$, where $\mathcal{P}$ is defined as 
$$\textstyle \mathcal{P}:=\left\{(\mu,\sigma)\in (L^{\infty, \infty}_{\FF})^{d}\times (L^{\infty, \infty}_{\FF})^{d\times n} : \sigma\sigma^{\top} \mbox{ is a.e. invertible and } \esssup_{t,\omega} | [\sigma\sigma^{\top}]^{-1}|<\infty\right\},$$}
i.e.\ for perturbations not necessarily satisfying  the stability of the Kernels in {\bf(H1)}. However, this extension of $u^{w}$ for perturbations not satisfying  {\bf(H1)} is meaningless, as we have already discussed. 
\end{remark}

We now provide a one-sided second order bound for the first order approximation error. {It seems that a full second-order expansion or better, a sensitivity analysis of the optimal wealth, is beyond what we can reach by only looking at the primal problem. See \cite{prepLarsen} for such results via the duality method and for strong perturbations in the negative-power utility case.} For simplicity we only consider perturbations of the market price of risk around the reference parameter $\bar{\lambda}$.

\begin{proposition}
For any $\delta >0$ and $|\epsilon|\leq \delta$ we have
\begin{align}\textstyle \label{second order}
u^w(\bar{\lambda}+\epsilon \Delta\lambda) - u^w(\bar{\lambda})-\epsilon Du^w(\bar{\lambda})\Delta\lambda  \geq -C(\delta)\epsilon^2,
\end{align}
where $C(\delta)\geq 0$ and $Du^w$ is given by \eqref{derivadadireccionalconellambda}.

\end{proposition}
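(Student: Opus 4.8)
The idea is to exploit once more the support-function representation of $u^w$. Recall $u^w(\lambda)=F\circ g(\lambda)=\sup\{\EE[g(\lambda)Z]\,:\,Z\in L_J^+,\ J(Z)\leq x\}$, where $F$ is the support function of the weak* compact convex set $\mathcal{C}:=\{Z\in L_J^+:J(Z)\leq x\}$ and $g(\lambda)=\mathcal{E}(\int[\lambda-\bar\lambda]^\top\dd W)_T$. Since $F$ is a support function it is in particular convex (as a sup of the affine maps $Y\mapsto\EE[YZ]$), so the composition $\epsilon\mapsto u^w(\bar\lambda+\epsilon\Delta\lambda)=F(g(\bar\lambda+\epsilon\Delta\lambda))$ is a convex function of $Y$ precomposed with the $L_I$-valued curve $\epsilon\mapsto g(\bar\lambda+\epsilon\Delta\lambda)$. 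The plan is: first, show that this curve is twice differentiable (in the Fréchet/norm sense in $L_I$) with a uniformly norm-bounded second derivative for $\epsilon$ in $[-\delta,\delta]$; second, Taylor-expand $g$ to second order with quadratic remainder controlled in $L_I$-norm; third, use convexity of $F$ together with its Lipschitz property (it is $1$-Lipschitz up to the bound $\|Z\|_J\leq 1+x$ on $\mathcal C$, via the Hölder inequality of \cite[Proposition 5.9]{BFrobustez1}) to convert the second-order control on $g$ into the one-sided second-order bound on $u^w$, and identify the first-order term with $Du^w(\bar\lambda)\Delta\lambda$ via Proposition \ref{propopreli}.

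\medskip

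Concretely: write $g(\bar\lambda+\epsilon\Delta\lambda)=\mathcal{E}(\epsilon\int\Delta\lambda^\top\dd W)_T=\exp\{\epsilon\int_0^T\Delta\lambda^\top\dd W-\tfrac{\epsilon^2}{2}\int_0^T|\Delta\lambda|^2\dd t\}=:g_\epsilon$. A direct computation gives
\[
g_\epsilon = 1 + \epsilon\Big(\textstyle\int_0^T\Delta\lambda^\top\dd W\Big) + R_\epsilon,
\]
where $R_\epsilon$ collects the higher-order terms; one checks, exactly as in the proof of Lemma \ref{lemg} (Cauchy--Schwarz, the BDG inequality, and the explicit bound $\mathcal{E}(\int[\bar\lambda+\nu]^\top\dd W)_T$-weighted $L^q$ norms in \eqref{normaI} with $\nu\equiv 0$), that $\|R_\epsilon\|_I\leq C(\delta)\epsilon^2$ for $|\epsilon|\leq\delta$, the constant depending on $\delta$, $T$, $q$, $\|\bar\lambda\|_{\infty,\infty}$ and $\|\Delta\lambda\|_{\infty,\infty}$ only. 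Then, using that for fixed $Z^\star\in\mathcal C$ attaining $u^w(\bar\lambda)$ (the optimizer, which exists and is unique) we have $u^w(\bar\lambda+\epsilon\Delta\lambda)\geq\EE[g_\epsilon Z^\star]$, while $u^w(\bar\lambda)=\EE[Z^\star]$ and, by Proposition \ref{propopreli}, $Du^w(\bar\lambda)\Delta\lambda=\EE[Z^\star\int_0^T\Delta\lambda^\top\dd W]$ (here $Z^\star=U(X[\bar\lambda]_T)$ and $g(\bar\lambda)=1$), we obtain
\[
u^w(\bar\lambda+\epsilon\Delta\lambda)-u^w(\bar\lambda)-\epsilon Du^w(\bar\lambda)\Delta\lambda \;\geq\; \EE[R_\epsilon Z^\star] \;\geq\; -\|R_\epsilon\|_I\,\|Z^\star\|_J \;\geq\; -C(\delta)(1+x)\,\epsilon^2,
\]
using Hölder for modular spaces and $\|Z^\star\|_J\leq 1+x$. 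Relabelling the constant gives \eqref{second order}.

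\medskip

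\textbf{Main obstacle.} The only real work is the quadratic $L_I$-estimate $\|R_\epsilon\|_I\le C(\delta)\epsilon^2$: one must expand $\exp\{x_\epsilon\}$ with $x_\epsilon=\epsilon\int\Delta\lambda^\top\dd W-\tfrac{\epsilon^2}{2}\int|\Delta\lambda|^2\dd t$ keeping the remainder past the linear term, and bound the relevant $\mathcal{E}(\int\bar\lambda^\top\dd W)$-weighted $L^q$-moments of $x_\epsilon^2 e^{ax_\epsilon}$ uniformly in $|\epsilon|\le\delta$; this is a direct variant of the dominated-convergence/BDG arguments already carried out in Lemma \ref{lemg}, so no new idea is needed, only bookkeeping. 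A minor point to be careful about is that the bound must be \emph{one-sided} and this is automatic here precisely because we compare against a \emph{fixed} feasible competitor $Z^\star$ rather than trying to control the (unknown) optimizer of the perturbed problem — this is exactly why a two-sided, i.e.\ full second-order, expansion is out of reach by purely primal means, as noted before the statement.
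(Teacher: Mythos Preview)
Your proposal is correct and follows essentially the same route as the paper: fix the optimizer $\bar Z=Z^\star$ of the unperturbed problem as a feasible competitor for the perturbed one, reduce the one-sided bound to $\|g(\bar\lambda+\epsilon\Delta\lambda)-1-\epsilon\int_0^T\Delta\lambda^\top\dd W\|_I\le C(\delta)\epsilon^2$, and conclude via the modular H\"older inequality and $\|\bar Z\|_J\le 1+x$. The only minor difference is in how the quadratic remainder estimate is obtained: the paper writes the remainder $Y_t:=g(\bar\lambda+\epsilon\Delta\lambda)_t-1-\epsilon\int_0^t\Delta\lambda^\top\dd W$ as the solution of the linear SDE $\dd Y_t=[Y_t\epsilon\Delta\lambda_t^\top+\epsilon^2\Delta\lambda_t^\top\int_0^t\Delta\lambda_s^\top\dd W_s]\dd W_t$ and invokes the Gronwall-type $L^\rho$ estimate of Lemma~\ref{integrabilidad}(i), whereas you propose a direct second-order Taylor expansion of $e^{x_\epsilon}$ together with the Cauchy--Schwarz/BDG bookkeeping from the proof of Lemma~\ref{lemg}. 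Both devices yield the same $\epsilon^2$ bound; the SDE route is slightly cleaner since it avoids separately controlling the $-\tfrac{\epsilon^2}{2}\int|\Delta\lambda|^2\dd t$ and $x_\epsilon^2 e^{\theta x_\epsilon}$ pieces.
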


\begin{proof}
Denoting $\bar{Z}$ the optimizer for $u^w(\bar{\lambda})$, we have by H\"older's inequality
\begin{align*}\textstyle 
u^w(\bar{\lambda}+\epsilon \Delta\lambda) - u^w(\bar{\lambda})-\epsilon Du^w(\bar{\lambda})\Delta\lambda  & \geq \textstyle  \EE \bigl[ \bar{Z}\bigl \{ g(\bar{\lambda}+\epsilon \Delta\lambda)-1-\epsilon \int_0^T\Delta\lambda^{\top}\dd W  \bigr\}  \bigr] \\ 
& \geq \textstyle -\|\bar{Z}\|_{J}\bigl \| g(\bar{\lambda}+\epsilon \Delta\lambda)-1-\epsilon \int_0^T\Delta\lambda^{\top}\dd W   \bigr\|_I.
\end{align*}
Defining $Y_t:=g(\bar{\lambda}+\epsilon \Delta\lambda)_t-1-\epsilon \int_0^t\Delta\lambda^{\top}\dd W  $, we argue as in the proof of Lemma \ref{lemg} that
%
%
$$\textstyle dY_t = \bigl[Y_t\epsilon \Delta\lambda_t^{\top} +\epsilon^2 \Delta\lambda_t^{\top} \int_0^t\Delta\lambda^{\top}_s\dd W_s  \bigr  ]\dd W_t, $$
so by  the SDE estimate in Lemma \ref{integrabilidad}, we find
$$\textstyle \EE[Y_T^{q}]\leq c e^{k\epsilon^{q} \|\Delta\lambda\|^{q}_{\infty,\infty} } \EE \bigl [\int_0^T \bigl\{  \epsilon^{2q}  |\Delta\lambda_t|^{q} (\int_0^t  \Delta\lambda^{\top}_s\dd W_s  )^{q}\bigr\} \dd t\bigr ],  $$
thus $\|Y_T\|_I^q\leq \tilde{C}(\delta)\epsilon^{2q} $ and we conclude.
\end{proof}

To conclude this section, we show how the results in Theorem  \ref{teoremarkmenossimpli} extend to the case of non trivial interest rate. 
%
%
 More precisely, suppose now that the market comprises  the previous $d$ risky  assets $S^1, \hdots, S^d$ and also a riskless asset  $S^0$, satisfying that $\dd S^0_t= r_tS^0_t \dd t$, $S^0_0=s_0^{0}\in \RR_{++}$, with $r\in L^{\infty,\infty}_{\FF}$.  In this case the wealth process satisfies the SDE
$$\ba{rcl}
\dd X_t^{\pi}&=& \left[ r(t)X_t^{\pi} +\pi_t^{\top}(\mu_t-r_t{\bf 1}) \right] \dd t+\pi_t^{\top} \sigma_t \dd W_t, \hspace{0.4cm} \; t \in [0,T],\\[4pt]
X_0^{\pi}&=&x, \ea
$$
where ${\bf 1}$ denotes the vector of ones in $\RR^{d}$. Let us fix $(\bar{r},\bar{\mu},\bar{\sigma})\in L^{\infty,\infty}_{\FF}\times \P$ and for any $(r^{\tau},\mu^{\tau},\sigma^{\tau})\in L^{\infty,\infty}_{\FF}\times \P$ denote by $u^{s}(r^{\tau}, \mu^{\tau}, \sigma^{\tau})$ the value of the strongly perturbed problem. Then, by a simple change of variable, for a $p$-power utility function ($p\in (1,\infty)$) we find that 
$$u^{s}(r^{\tau}, \mu^{\tau}, \sigma^{\tau})= \sup_{   \pi \in \Pi } \EE^{\PP}\left( e^{\frac{1}{p}\int_{0}^{T} r^{\tau}_t \dd t}U(\hat{X}_{T}^{ \pi,\tau})\right),$$
where $\hat{X}_{T}^{ \pi,\tau}$ solves
$$\ba{rcl}
\dd \hat{X}_t^{\pi,{\tau}}&=&  \pi_t^{\top}(\mu^{\tau}_t-r^{\tau}_t{\bf 1})  \dd t+\pi_t^{\top} \sigma^{\tau}_t \dd W_t, \hspace{0.4cm} \; t \in [0,T],\\[4pt]
\hat{X}_0^{\pi,\tau}&=&x. \ea
$$
{Assuming that $\bar{\sigma}$ and $\sigma^\tau$ satisfy {\bf(H1)},} we then define the weakly perturbed value function as
$$u^{w}(r^{\tau}, \mu^{\tau}, \sigma^{\tau})= \sup_{ { \pi \in \Pi }} \EE^{\PP^{\tau}}\left( e^{\frac{1}{p}\int_{0}^{T} r^{\tau}_t \dd t}U(\hat{X}_{T}^{ \pi})\right),$$
with $\hat{X}_{T}^{ \pi}$ solving
$$\ba{rcl}
\dd \hat{X}_t^{\pi}&=&  \pi_t^{\top}(\bar{\mu}_t-\bar{r}_t{\bf 1})  \dd t+\pi_t^{\top} \bar{\sigma}_t \dd W_t, \hspace{0.4cm} \; t \in [0,T],\\[4pt]
\hat{X}_0^{\pi}&=&x, \ea
$$
and {$\dd\PP^{\tau}=\E\left[\int\left( \lambda^{\tau}_r - \bar{\lambda}_r\right)\dd W \right]_T\dd\PP$, with $\lambda^{\tau}_r := (\sigma^{\tau})^{\top}[\sigma^{\tau}(\sigma^{\tau})^{\top}]^{-1}(\mu^{\tau}-r^{\tau}{\bf 1})$ and  $\bar{\lambda}_r := \bar{\sigma}^{\top}[\bar{\sigma}\bar{\sigma}^{\top}]^{-1}(\bar{\mu}-\bar{r}{\bf 1})$.
}
Thus, arguing exactly as before we obtain the following sensitivities; for every $(\Delta r, \Delta \mu, \Delta \sigma) \in L^{\infty,\infty}_{\FF}\times (L^{\infty,\infty}_{\FF})^{d} \times (L^{\infty,\infty}_{\FF})^{d\times n}$ {such that $\sigma^{\tau}:= \bar{\sigma}+ \tau \Delta \sigma$ satisfies {\bf(H1)} for $\tau>0$ small enough}, we have
$$
\ba{ll} \textstyle
D_{r}u^w(\bar{r},\bar{\mu},\bar{\sigma})\Delta r =& \textstyle \EE \left [e^{\frac{1}{p}\int_{0}^{T} r_t \dd t}U(\hat{X}_{T}^{ \pi})\left\{ \frac{1}{p}\int_{0}^{T} \Delta r_t \dd t -   \int_0^T [\bar{\sigma}^{\top}[\bar{\sigma}\bar{\sigma}^{\top}]^{-1}\Delta r {\bf 1}]^{\top}\dd W\right\} \right], \\[6pt] \textstyle
D_{\mu}u^w(\bar{r},\bar{\mu},\bar{\sigma})\Delta \mu =& \textstyle \EE \left [ e^{\frac{1}{p}\int_{0}^{T} r_t \dd t}U(\hat{X}_{T}^{ \pi})  \int_0^T [\bar{\sigma}^{\top}[\bar{\sigma}\bar{\sigma}^{\top}]^{-1}\Delta \mu]^{\top}\dd W \right ], \\[6pt] \textstyle
D_{\sigma}u^w(\bar{r},\bar{\mu},\bar{\sigma})\Delta \sigma =& \textstyle\EE \left [e^{\frac{1}{p}\int_{0}^{T} r_t \dd t}U(\hat{X}_{T}^{ \pi})   \int_0^T  \left[\Delta\sigma^{\top}[\bar{\sigma}\bar{\sigma}^{\top}]^{-1}(\bar{\mu}-\bar{r}{\bf 1})\right]^{\top} \dd W\right] \\[6pt] \textstyle
 &- \textstyle\EE \left [e^{\frac{1}{p}\int_{0}^{T} r_t \dd t}U(\hat{X}_{T}^{ \pi}) \int_{0}^{T} \left[\bar{\sigma}^{\top} [\bar{\sigma}\bar{\sigma}^{\top}]^{-1}  [\bar{\sigma}\Delta\sigma^{\top} +\Delta\sigma\bar{\sigma}^{\top} ]  [\bar{\sigma}\bar{\sigma}^{\top}]^{-1}(\bar{\mu}-\bar{r}{\bf 1})\right]^{\top}\dd W  \right].  \ea$$%
\section{A final discussion} 
\label{onvaluefunctions}

As we have seen in Section \ref{subsec counter} the sensitivities in the weak and strong formulations may differ. Proposition \ref{igualdadfuerteydebil} and {Remark \ref{difficulty_stochastic_parameters}} thereafter, on the other hand, give a hint as to why this happens. We close the article by providing an expression, which we derive  heuristically, connecting the sensitivities of the weakly and strongly perturbed problems. For simplicity, we restrict the analysis to varying market prices of risk only (and fixed volatilities, so only the drift is being perturbed). We work in canonical continuous-paths space.

Let us denote $\theta^{\epsilon}(\omega)=\omega + \epsilon \int \delta\lambda \dd s$ a shift in canonical space and $X^*$ the optimal wealth ($\pi^*$ the optimal portfolio) under reference parameters. Then
\begin{multline*}
\textstyle
\EE\left [ U(X^*(T)\circ \theta^{\epsilon}) \right ] - \EE\left [ U(X^*(T)) \right ]   = \\ \textstyle
 \EE\bigl [ U\bigl (  x + \int_0^T[\pi^*\cdot\bar{\lambda}] \circ \theta^{\epsilon}\dd s +\int_0^T \pi^*\circ \theta^{\epsilon}\cdot\dd W + \epsilon\int_0^T[\pi^* \circ \theta^{\epsilon}]\cdot \delta\lambda\dd s \bigr) \bigr ] - \EE\left [ U(X^*(T)) \right ]. 
 \end{multline*}
From this we conclude that, if the corresponding directional derivatives in path-space are well-defined, 
$$\textstyle\left .\frac{\dd}{\dd \epsilon}\EE\left [ U(X^*(T)\circ \theta^{\epsilon}) \right ]\right\vert_{\epsilon=0} 
=\EE\bigl [ U'(X^*(T)) \bigl \{ \int_0^T D[\pi_s^*\cdot\bar{\lambda}_s](\omega, \delta\lambda)\dd s +\int_0^T D\pi_s^*(\omega,\delta\lambda)\dd W_s + \int_0^T\pi^*\cdot\delta\lambda \dd s \bigr\}\bigr ].$$ 
Now, by Bismut's integration by parts formula (see e.g. \cite[Chapter IV, Section 41]{MR1780932} and the assumptions therein), under given conditions this implies: 
\begin{multline}\label{eqluegoderogers}\textstyle\EE\bigl[U(X^*(T))\int_0^T\delta\lambda^{\top}\dd W \bigr]= \\ \textstyle
\EE\bigl[ U'(X^*(T)) \bigl \{ \int_0^T D[\pi^*_s\cdot\bar{\lambda}](\omega, \delta\lambda)\dd s +\int_0^T D\pi_s^*(\omega,\delta\lambda)\dd W_s + \int_0^T\pi^*\cdot\delta\lambda \dd s \bigr\}\bigr ].
\end{multline}
We can reasonably conjecture, if anything like the ``envelope'' or ``Danskin Theorem'' is to hold for it, as well as a directional chain rule, that  
$$ \textstyle Du^s(\bar{\lambda})\delta\lambda =\EE\bigl[U'(X^*(T))\int_0^T \pi^*\cdot \delta\lambda \dd t\bigr],$$
in accordance to \cite{prepLarsen} for the case of negative power utility,
and so the l.h.s.\ in \eqref{eqluegoderogers} is the sensitivity associated to weak perturbations (see \eqref{derivadadireccionalconellambda}, evaluated at $\bar{\lambda}$) whereas the sensitivity for strong perturbations is contained in the r.h.s. Thus, we obtain the sought after relationship between sensitivities: 
\begin{equation}\label{eq heuristic}
 Du^w(\bar{\lambda})\delta\lambda-Du^s(\bar{\lambda})\delta\lambda= \EE\left [ U'(X^*(T)) \left \{ \int_0^T D[\pi_s^*\cdot\bar{\lambda}_s](\omega, \delta\lambda)\dd s +\int_0^T D\pi_s^*(\omega,\delta\lambda)\dd W \right\}\right ].
\end{equation}

\noindent It seems to us that a rigorous derivation of \eqref{eq heuristic} is an interesting, and challenging, open problem.

We now make use of \eqref{eq heuristic} to recover the result in Proposition \ref{igualdadfuerteydebil}. Let us assume that $\bar{\lambda}$ is deterministic and see what this can imply. Call $$ \textstyle R_t:=\int_0^t D[\pi^*_s](\omega, \delta\lambda)\cdot\bar{\lambda}\dd s +\int_0^t D\pi_s^*(\omega,\delta\lambda)\dd W_s =\int_0^t D[\pi^*_s](\omega, \delta\lambda)\cdot\{\bar{\lambda}\dd s +\dd W_s\}. $$
By duality and \cite[Corollary 3.3]{ZitLar} we know that  there is a scalar $a$ (making sure that $X^*(T)$ satisfies the budget constraint) such that $U'(X^*(T))=a\E\left(-\int[\bar{\lambda}+\nu] \dd W \right)$, for some $\nu\in K(\bar{\sigma})$; see Section \ref{posssens}. We then see by the product formula that, upon defining $\dd Z_t=-Z_t [\bar{\lambda}+\nu]\dd W_t$, we get:
\begin{multline*}\textstyle
\EE\bigl [ U'(X^*(T)) \bigl \{ \int_0^T D[\pi^*\cdot\bar{\lambda}] \delta\lambda\dd s +\int_0^T D\pi^*\delta\lambda\dd W \bigr\}\bigr ]=\textstyle a \EE[Z_TR_T]\\
=a \textstyle\EE\bigl [ \int_0^T R_t\dd Z_t\bigr ]+a\EE\bigl [ \int_0^T Z_tD[\pi^*_t](\omega, \delta\lambda)\cdot\{\bar{\lambda}\dd t+\dd W_t\}\bigr ]-a\EE\bigl[\int_0^T Z_t[\bar{\lambda}_t+\nu_t]\cdot D[\pi^*_s](\omega, \delta\lambda)\dd s\bigr ] .
\end{multline*}
Under enough integrability conditions 
so that the Brownian integrals are martingales, we conclude \small
\begin{align*} \textstyle
\EE\bigl [ U'(X^*(T)) \bigl \{ \int_0^T D[\pi^*\cdot\bar{\lambda}] \delta\lambda\dd s +\int_0^T D\pi^*\delta\lambda\dd W \bigr\}\bigr ]& = \textstyle -a\EE\bigl[\int_0^T Z_t \nu_t\cdot D[\pi^*_s](\omega, \delta\lambda)\dd s\bigr ] ,
\end{align*} \normalsize
and recalling that an optimal $n$-dimensional $\pi^*$ corresponds to a $\bar{\sigma}^{\top}\pi$ in the original $d$-assets, we see that if $\bar{\sigma}$ is deterministic then the r.h.s.\ also vanishes. All in all, we obtain 
\be\label{igualdadsensibilidadesd} Du^w(\bar{\lambda})\delta\lambda=Du^s(\bar{\lambda})\delta\lambda,\ee
which is in tandem with our Proposition \ref{igualdadfuerteydebil}, as well as \cite[Lemma 9.2]{Davis06} and \cite[Theorem 3.1]{Monoyios} for instance.
\section*{Appendix}
{We provide the proof of a version of the envelope or Danskin's theorem (see \cite{MR0228260}), adapted to our purposes. }First, we recall the notion of Hadamard differentiability. Given two Banach spaces $(\X, \| \cdot \|_{\X})$ and $(\Z, \| \cdot \|_{\Z})$ a map $f: \X \to \Z$ is  directionally differentiable at $x$  if  for all $h\in \X$ the limit in $\Z$
$$ \textstyle Df(x,h):= \lim_{\tau \downarrow 0} \frac{f(x+\tau h)-f(x)}{\tau}, $$
exists. If in addition, for all $h\in \X$ the following equality in $\Z$ holds
$$\textstyle Df(x,h)= \lim_{\tau \downarrow 0, \; h'\to h} \frac{f(x+\tau h')-f(x)}{\tau} ,$$
then we say that $f$ is  directionally differentiable at $x$ in the Hadamard sense. 
An important property of Hadamard differentiable functions is the chain rule. More precisely, if $(\V, \| \cdot\|_{\V})$ is another Banach space,  $g: \V \to \X$ is directionally differentiable at $v$ and $f$ is directionally differentiable at $g(v)$ in the Hadamard sense, then the composition $f\circ g$ is directionally differentiable at $v$ (see e.g. \cite[Proposition 2.47]{BonSha}) and $D(f\circ g)(v, v')= Df(g(v),Dg(v,v'))$ for all $v'\in \V$. If in addition, $g$ is is also Hadamard directionally differentiable at $v$, then $f\circ g$ is  directionally differentiable at $v$ in the Hadamard sense.

\label{soporte}
Now, suppose that $K\subseteq \X$ is a weakly compact set. Let us consider the problem:
 $$\textstyle\sup_{Z\in X} \langle d, Z \rangle \hspace{0.3cm} \mbox{s.t. } \; Z\in K, \eqno(AP_{d})$$
where $d\in \X^{*}$ and $\langle \cdot , \cdot\rangle$ denotes the bilinear pairing between $\X$ and $\X^{*}$.  Let us define $v: \X^{*} \to \RR$ as the optimal value of problem $(AP_{d})$ and $\SS(d)$ the set of optimal solutions of  $(AP_{d})$, i.e.
$$\textstyle v(d):= \sup_{Z\in K} \langle d, Z \rangle,  \hspace{0.6cm} \SS(d):= \left\{ Z\in K \; ; \; v(d)= \langle d, Z \rangle\right\}.$$
Note that $v$ is well defined, it is a Lipschitz function  and {$\SS(d)\neq \emptyset$}. In fact, 
\be\label{emqmemqeqa}\textstyle | v(d_1)- v(d_2) | \leq \| d_1- d_2\|_{\X^{*}}\sup_{Z\in K} \| Z\|_{\X}.\ee
{The proof of the following result is  a simple modification of the proof in \cite[Theorem 4.13]{BonSha}.}
 \begin{lemma}\label{lemmaenvelope} For any $\bar{d}\in \X^{*}$, the following assertions hold true\smallskip\\
{\rm(i)} The set $\SS(\bar{d})$ is weakly compact.\smallskip\\
{\rm(ii)} The function $v$ is   directionally differentiable in the Hadamard sense  and   its directional derivative is
\be\label{mmmwnrnwr} \textstyle Dv(\bar{d}, \Delta d)= \sup_{Z\in \SS(\bar{d})} \langle \Delta d, Z \rangle \hspace{0.4cm} \mbox{{\rm for all} $\Delta d\in \X^{*}$. }  \ee
 \end{lemma}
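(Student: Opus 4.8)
\textbf{Proof plan for Lemma \ref{lemmaenvelope}.}

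The plan is to mimic the classical Danskin/envelope argument from \cite[Theorem 4.13]{BonSha}, the only twist being that here $K$ is weakly compact rather than compact in norm, so all limiting arguments have to be performed along weakly convergent nets (or subnets), using that $Z\mapsto\langle d,Z\rangle$ is weakly continuous for each fixed $d\in\X^{*}$. First I would prove (i): take a net $(Z_\alpha)\subseteq\SS(\bar d)$; by weak compactness of $K$ it has a weak cluster point $Z\in K$, and since $\langle\bar d,\cdot\rangle$ is weakly continuous we get $\langle\bar d,Z\rangle=v(\bar d)$, so $Z\in\SS(\bar d)$, hence $\SS(\bar d)$ is weakly closed; being a weakly closed subset of the weakly compact set $K$, it is weakly compact. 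Note also that $\SS(\bar d)\neq\emptyset$ follows by the same weak-compactness-plus-weak-continuity argument applied to a maximizing net, which justifies the remark preceding the lemma.

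For (ii) I would first establish the lower bound: for any fixed $Z\in\SS(\bar d)$ and any $\tau>0$,
\[
\frac{v(\bar d+\tau\Delta d)-v(\bar d)}{\tau}\;\geq\;\frac{\langle\bar d+\tau\Delta d,Z\rangle-\langle\bar d,Z\rangle}{\tau}=\langle\Delta d,Z\rangle,
\]
so that $\liminf_{\tau\downarrow0}\tfrac{v(\bar d+\tau\Delta d)-v(\bar d)}{\tau}\geq\sup_{Z\in\SS(\bar d)}\langle\Delta d,Z\rangle$. For the upper bound, for each $\tau>0$ pick $Z_\tau\in\SS(\bar d+\tau\Delta d)$ (nonempty by part (i) applied at $\bar d+\tau\Delta d$); then
\[
\frac{v(\bar d+\tau\Delta d)-v(\bar d)}{\tau}\;\leq\;\frac{\langle\bar d+\tau\Delta d,Z_\tau\rangle-\langle\bar d,Z_\tau\rangle}{\tau}=\langle\Delta d,Z_\tau\rangle.
\]
Along any sequence $\tau_k\downarrow0$, weak compactness of $K$ gives a weakly convergent subnet $Z_{\tau_k}\rightharpoonup Z^*\in K$; using $\langle\bar d,Z_{\tau_k}\rangle=v(\bar d+\tau_k\Delta d)\to v(\bar d)$ (by the Lipschitz bound \eqref{emqmemqeqa}) together with weak continuity of $\langle\bar d,\cdot\rangle$, we obtain $\langle\bar d,Z^*\rangle=v(\bar d)$, i.e. $Z^*\in\SS(\bar d)$; and $\langle\Delta d,Z_{\tau_k}\rangle\to\langle\Delta d,Z^*\rangle\leq\sup_{Z\in\SS(\bar d)}\langle\Delta d,Z\rangle$. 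Hence $\limsup_{\tau\downarrow0}\tfrac{v(\bar d+\tau\Delta d)-v(\bar d)}{\tau}\leq\sup_{Z\in\SS(\bar d)}\langle\Delta d,Z\rangle$, and combined with the lower bound this proves \eqref{mmmwnrnwr}, with the supremum attained.

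It remains to upgrade directional differentiability to the Hadamard sense, i.e. to show that the limit is unchanged if $\Delta d$ is replaced by a net $\Delta d'\to\Delta d$ in $\X^{*}$ as $\tau\downarrow0$. This is where the Lipschitz estimate \eqref{emqmemqeqa} does the work: writing $M:=\sup_{Z\in K}\|Z\|_{\X}<\infty$, for any $\Delta d'$ we have
\[
\left|\frac{v(\bar d+\tau\Delta d')-v(\bar d)}{\tau}-\frac{v(\bar d+\tau\Delta d)-v(\bar d)}{\tau}\right|\leq\frac{|v(\bar d+\tau\Delta d')-v(\bar d+\tau\Delta d)|}{\tau}\leq M\,\|\Delta d'-\Delta d\|_{\X^{*}},
\]
so as $\tau\downarrow0$ and $\Delta d'\to\Delta d$ the difference quotient for $\Delta d'$ and that for $\Delta d$ have the same limit, which is $Dv(\bar d,\Delta d)$; this is exactly Hadamard directional differentiability. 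I expect the only genuinely delicate point to be the careful handling of subnets (rather than subsequences) in the upper-bound step, since $K$ is merely weakly compact and need not be weakly sequentially compact in general; if one prefers to avoid nets, one can instead invoke that in the relevant application $K=\{Z\in L_J^+:J(Z)\leq x\}$ sits in a separable predual situation so that weak* sequential compactness is available, but the net argument above is cleaner and fully general.
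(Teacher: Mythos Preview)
Your proposal is correct and follows essentially the same route as the paper: weak closedness of $\SS(\bar d)$ via weak continuity of $\langle\bar d,\cdot\rangle$, a lower bound from any optimizer $Z\in\SS(\bar d)$, an upper bound via weak cluster points of approximate optimizers $Z_\tau$, and the upgrade to Hadamard differentiability from the Lipschitz bound \eqref{emqmemqeqa} (the paper invokes \cite[Proposition 2.49]{BonSha} for this last step, which is exactly the estimate you wrote out by hand). One small slip to fix in your upper-bound step: it is not true that $\langle\bar d,Z_{\tau_k}\rangle=v(\bar d+\tau_k\Delta d)$; rather $\langle\bar d+\tau_k\Delta d,Z_{\tau_k}\rangle=v(\bar d+\tau_k\Delta d)$, whence $\langle\bar d,Z_{\tau_k}\rangle=v(\bar d+\tau_k\Delta d)-\tau_k\langle\Delta d,Z_{\tau_k}\rangle\to v(\bar d)$ since the second term is bounded by $\tau_k M\|\Delta d\|_{\X^*}$, and the rest of your argument goes through.
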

 \begin{proof}  The first assertion follows directly from the weak-continuity of $\langle \bar{d}, \cdot\rangle$, which implies the weak closedness of   $\SS(\bar{d})$.  Now, in view of \cite[Proposition 2.49]{BonSha} {and \eqref{emqmemqeqa}} it suffices to show that $v$ is {directionally} differentiable.  Let $\bar{Z}\in S(\bar{d})$ be such that $ \langle \Delta d, \bar{Z} \rangle= \sup_{Z\in \SS(\bar{d})} \langle \Delta d, Z \rangle$  and for  $\tau>0$ set $d_{\tau}:=  \bar{d}+ \tau \Delta d$. By definition
$$ v(d_{\tau})- v(\bar{d})  \geq \langle d_{\tau}-\bar{d}, \bar{Z}\rangle= \tau \langle \Delta d, \bar{Z} \rangle, $$
which implies that 
\be\label{qmmenqnedasasad} \textstyle
\liminf_{\tau\to 0} \frac{ v(d_{\tau})- v(\bar{d})}{\tau} \geq  \langle \Delta d, \bar{Z} \rangle= \sup_{Z\in \SS(\bar{d})} \langle \Delta d, Z \rangle.
\ee
Analogously, let $Z_{\tau} \in S(d_{\tau})$. Then 
\be\label{qeqekqkednnddas}\textstyle v(\bar{d})- v(d_\tau)  \geq - \langle d_{\tau}-\bar{d}, Z_{\tau}\rangle= - \tau\langle \Delta d, Z_\tau \rangle.\ee
On the other hand, using \eqref{emqmemqeqa}  we get that $v(d_{\tau})\to v(\bar{d})$ as $\tau \downarrow 0$, which implies, since $d_{\tau}\to \bar{d}$ strongly in $\X^*$, that any weak limit point of $Z_{\tau}$ belongs to $\SS(\bar{d})$. Thus,  \eqref{qeqekqkednnddas} yields  
\be\label{amadnandad}\textstyle \limsup_{\tau\to 0} \frac{ v(d_{\tau})- v(\bar{d})}{\tau} \leq   \limsup_{\tau\to 0}  \langle \Delta d, Z_{\tau} \rangle \leq  \sup_{Z\in \SS(\bar{d})} \langle \Delta d, Z \rangle.\ee
Therefore,  \eqref{mmmwnrnwr}  is a consequence of  \eqref{qmmenqnedasasad} and \eqref{amadnandad}.
 \end{proof}
 
%

\bibliographystyle{plain}
\bibliography{projjf3}
\end{document}